%
%


\documentclass[13pt]{article}

\usepackage{graphicx}

\usepackage{amsthm,amsmath,natbib,amsfonts}
\usepackage{xcolor}
\usepackage{subcaption}
\usepackage{color}
\usepackage{makecell}
\usepackage{algpseudocode}
\usepackage{algorithm}
\usepackage{geometry}
\usepackage{booktabs}
\usepackage{siunitx}
\usepackage{threeparttable}




\newtheorem{theorem}{{\bf \sc Theorem}}
\newtheorem{proposition}{{\bf \sc Proposition}}
\newtheorem{lemma}[proposition]{{\bf \sc Lemma}}

\newtheorem{definition}{{\bf \sc Definition}}
\newtheorem{remark}{{\bf \sc Remark}}
\newtheorem{result}{{\bf \sc Result}}

\newtheorem{assumption}{{\bf \sc Assumption}}
\theoremstyle{remark} \newtheorem{example}{{\bf \sc Example}}

\newcommand{\bge}{\begin{equation}}
\newcommand{\ene}{\end{equation}}

\newcommand{\R}{ \mathbb{R}}
\newcommand{\W}{\mathcal{W}}

\newcommand{\ubar}[1]{\text{\b{$#1$}}}


\begin{document}


\title{Decentralized Pure Exchange Processes on Networks\footnote{Previous versions of this paper have circulated under the titles ``A Second Welfare Theorem for Pure Exchange Processes on Networks'' and ``Edgeworth trading on Networks''.
We thank Malbor Asllani, Timoteo Carletti, Simone Cerreia Vioglio, Luca Chiantini, Satoshi Fukuda, Edoardo Gallo, Benjamin Golub, Michael  K{\"o}nig, Thomas Mariotti, Alessandro Pavan, Joel Robbins, Alex Teytelboym and Alexander Wolitzky for helpful comments.
P.\ Pin acknowledges funding from the Italian Ministry of Education Progetti di Rilevante Interesse Nazionale (PRIN) grant  2017ELHNNJ.}}
\author{
{\sc Daniele Cassese} \\[2pt]
Emmanuel College, University of Cambridge, UK\\
{dc554@cam.ac.uk}\\[2pt]
\\[6pt]
{\sc and}\\[6pt]
{\sc Paolo Pin}\\[2pt]
Department of Economics and Statistics, Universit\`a di Siena, Italy \\
BIDSA, Universit\`a Bocconi, Italy\\
{paolo.pin@unisi.it}}
\date{}

\maketitle

\begin{abstract}
We define a class of pure exchange Edgeworth trading processes that under minimal assumptions converge to a stable set in the space of allocations, and  characterise the Pareto set of these processes. Choosing a specific process belonging to this class, that we define \emph{fair trading}, we analyse the trade dynamics  between agents located on a weighted network. We determine the conditions under which there always exists a one-to-one map between the set of networks and the set of limit points of the dynamics. This result is used to understand what is the effect of the network topology on the trade dynamics and on the final allocation. 
We find that the positions in the network affect the distribution of the utility gains, given the initial allocations.
\end{abstract}

\noindent \textbf{Keywords:} out of equilibrium dynamics, exchange economy, Pareto processes, networks, centrality

\noindent \textbf{JEL classification:}  D50, D51, D52, D85


\section{Introduction}

This paper contributes to the literature on the dynamics of trade, providing a model of trade where agents are located on a network. There are many reasons to consider the network structure of trading opportunities, starting with the fact that real trades are shaped and influenced by the structure of relationships between agents: not everybody interacts with everybody else due to geography, social ties, technological compatibility. 
Trade on networks is a very active area of research in economics, for an exhaustive review see \citep{Manea_for}. The main difference between the contributions reviewed in \cite{Manea_for} and our work is that we do not explicitly model strategic interactions among agents, instead we focus on the characterisation of the dynamics of trade on a fixed network with a tractable convergent dynamical systems.
Inside this framework we prove a version of the Second Welfare Theorem for networks, contributing to the analysis of the effect of the network structure on final allocations and the distribution of welfare.

\noindent
In the Walrasian competitive equilibrium all trading in decentralized exchange takes place at the final equilibirum prices, while in actual market transactions agents discover equilibrium prices only by making mutually advantageous transaction at disequilibrium prices \citep{Foley10}. In order to circumvent the impasse given by the impossibility of a real price dynamics, the fictitious figure of the ``auctioneer" has to be introduced: in \emph{t\^{a}tonnement} models agents constantly recontract instead of trading and so only prices change out of equilibrium while quantities are fixed \citep{Fisher03}. 

\noindent
This paper adopts a different perspective, placing itself in the literature of out-of-equilibrium dynamics that started in the early sixties (see \citealp{petri_hahn} for a review). These models are called \emph{non-t\^{a}tonnement} processes, or trading processes. \cite{Uzawa62} and \cite{Hahn62} introduced the so called ``Edegworth process", where both prices and quantities adjust along the process: equilibrium is path-dependent, and out of equilibrium dynamics change the equilibrium set, while in the Walrasian case equilibrium is determined solely by the initial holdings and is independent of the path. Edgeworth processes rest on one fundamental assumption: trade takes place if and only if there is an increase in utility by trading. Both \cite{Uzawa62} and \cite{Hahn62} show that under standard assumption on the structure of preferences and on the space of goods in the economy, these processes approach in the limit a Pareto optimum.

\noindent
We model out of equilibrium dynamics, with quantity and prices adjustment using a version of the Edgeworth barter process. We define a class of trading processes that under a limited number of assumptions converge to equilibrium. Prices are the instantaneous rate of exchange between goods, and they can change at any moment along the process. Moreover, there is no gravitation towards equilibrium prices because equilibria are path-dependent. 

\noindent
{Our work is related to the literature on planning procedures  \citep{Dreze1971, Malinvaud1972} and in particular to the work of \cite{Cornet1983} on the neutrality of planning procedures, which we discuss in more details in section 5.}

\noindent
The main novelty with respect to the literature on out-of-equilibrium dynamics is the fact that only connected agents can trade: we introduce a static, weighted network determining who can trade with whom. 

\noindent
Among the works on dynamical networks, our paper relates with \cite{CowanJonard04} and \cite{koniga2017endogenous} who model knowledge diffusion as a barter process: agents meet their neighbours repeatedly and in case they have a differential in two dimensions of knowledge they trade, each receiving a constant share of the knowledge differential. 

\noindent 
{\color{black} Other related works include \cite{Flam2019} who study the emergence of price taking behaviour modelling trade as a sequence of bilateral exchanges where agents only trade if each exchange increases both agents' utility. Contrary to our case the network structure of agents matching is not explored but the author shows that equilibria can be path-dependent and are affected by the matching order.}

\noindent
Our model does not consider strategic interactions: we do not have any market game and agents do not trade with all others simultaneously but only engage in bilateral exchanges with their network contacts, which differentiates our approach from \cite{Ghosal2004}. Moreover, we do not allow for trade frictions as for example \cite{Fleiner2019}.



\noindent
{\color{black} Our model in principle can be applied to large numbers of players, even if in the current work we provide examples of small networks only. \cite{Axtell05} proves that decentralized exchange processes of the same class of our model have polynomial computational complexity, performing much better than Walrasian models which can be exponential in the worst case. }

The paper is structured as follows: in section 2 we define our family of bilateral trading processes, and we provide a characterisation of the Pareto set to which these processes converge. In section 3 we expose the trading rule of choice, namely the egalitarian rule, proving that the trade so defined belongs to the family of trades of our interest. In section 4 we extend trading to more agents, and we introduce the network structure as a weighted network. In section 5 we prove our version of the Second Welfare Theorem for networks, while in section 6 we provide a numerical example with Cobb-Douglas preferences, illustrating our result and investigating the impact of networks on equilibria. Finally section 7 discusses the Pairwise Stability of some network configurations under the fair trade dynamics.

\section{The model} 

\subsection{Pure exchange}\label{para: pure_exch}

There are $n \geq 2$ agents, we will generally refer to an agent $i \in \{ 1, \dots , n \} \equiv N$,
and $m \geq 2$ goods, and to a good $k \in \{ 1, \dots , m \} \equiv M$. Agents can only have non-negative quantities of each good, and we are considering a pure exchange economy with no production,  so that total resources in the economy are fixed and given by the sum of the agents' endowments. The endowment of agent $i$ is a point in the positive orthant of $\R^m$, call this space $\R_{+}^m $, where the $k-$th coordinate represents the quantity of good $k$. 
Assume time $t$ is continuous, with $t \in (0, \infty)$ and goods are infinitely divisible, and let $x_{ik,t}$ be the endowment of agent $i$ at time $t$ for good $k$.
In this way $ {x}_{i,t} \in \R^m$ is the $m$--dimensional vector of agent $i$'s endowment at time $t$, while $ {x}_{k,t} \in \R^n$ is the $n$--dimensional vector of all agents' endowments of good $k$ at time $t$.
As we assumed there is no production, nor can the goods be disposed of, the sum of the elements of each such vector $ {x}_{k,t} $ is constant in time.
The initial allocation of the economy is then represented by the $n$ vectors of agents' endowment at time zero, call it  $x_0 =  \{ {x}_{1,0} , \dots , {x}_{n,0} \}$. 
All agents' allocations at a given point in time can then be represented by an $(m \times n)$ matrix with all non-negative entries, call it ${\bf X}_t$. In the following we may not express the time variable $t$, when it does not create ambiguity. An unrestricted state of the economy at any time $t$ is a point in the positive orthant of an $\R^{m \times n}$ space, given by the Cartesian product $(\R^m)^n$ . As we assumed that resources are fixed in the economy at a point $w \in \R^m$ (where the $k$-th coordinate is the total quantity of good $k$ in the economy), the state space of our interest is a subset of $\R_+^{m \times n}$, call it $E=\{ x \in \R_+^{m \times n} : \sum x_i = w_i \}$, which is an open subset of an affine subspace with compact closure in $\R^{m \times n}$ \citep{smale_I}.

\begin{assumption}
Any agent $i$ is characterised by a twice continuously differentiable, strictly increasing utility function $U_i$ from $\R_+^m$ to $\R$.
\end{assumption}


\noindent
Given $ x_t \in E $, a point in the space of the economy at some point in time $t$, call ${U} (x_t)$ its corresponding $n$--dimensional vector of utilities. 
Define $\mu_{ik,t} \equiv \partial U_i ( {x}_{i,t} ) / \partial x_{ik} $ the marginal utility of agent $i$, with endowment ${x}_{i,t}$, with respect to good $k$, and ${\mu}_{i,t}$ the gradient of the utility function for agent $i$ at time $t$, that is the vector of all her marginal utilities. All individual gradients are represented by an $m \times n$ matrix of all the marginal utilites at a given point in time, call it  ${\bf M}_t$. 
The vector of strictly positive marginal utilities ${\mu}_{i,t}$, is proportional to any vector of marginal rates of substitutions with respect to any good $\ell \in \{ 1, \dots , m \}$.
{It is important to stress that for the rest of the paper we use a cardinal notion of utility, because this is the structure on which we build on our out--of--equilibrium dynamical process.}

\begin{assumption}
Agents' preferences are represented by cardinal utility functions.
\end{assumption}

\noindent
In the pure exchange economy defined above, the contract curve is given by the set of all those allocation where all marginal utilities are proportional.

\begin{definition}
The Pareto Set $\mathcal{W}$ of the pure exchange economy is defined as:
\begin{equation}\label{contract_c}
\W = \left\{ {\bf X} : \forall \ i,j \in N, \ \exists k \in \R , \ k \ne 0,  \ s.t. \ \mu_{i} ({x}_i) = k \mu_{j} ({x}_j) \right\} \ \ .
\end{equation}
\end{definition}

\begin{proposition}[\citealt{smale_I}] \label{smale_hom}  If the utility function is monotonic and indifference curves are convex, then the set of Pareto Optima is homoeomorphic to a closed $(n-1)$ simplex.
\end{proposition}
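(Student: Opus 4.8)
The plan is to parametrise the set of Pareto optima $\W$ by the simplex of \emph{welfare weights} $\Delta\equiv\Delta^{n-1}=\{\lambda\in\R^n_+:\sum_{i\in N}\lambda_i=1\}$, by exhibiting a continuous bijection $g:\Delta\to\W$. Since $\Delta$ is compact and $\W\subseteq(\R^m)^n$ is Hausdorff, such a $g$ is automatically a homeomorphism, which is the assertion. Throughout, $\bar W$ denotes the closure of $W$, which is compact and convex.

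For $\lambda\in\Delta$ let $g(\lambda)$ maximise $x\mapsto\sum_{i\in N}\lambda_i U_i(\vec x_i)$ over $\bar W$; a maximiser exists by compactness. When $\lambda$ is interior the maximiser is unique: if $x\neq x'$ were two maximisers, then $\tfrac12(x+x')\in\bar W$ by convexity and, by strict convexity of preferences, $U_i\big(\tfrac12(\vec x_i+\vec x'_i)\big)\ge\min\{U_i(\vec x_i),U_i(\vec x'_i)\}$ with strict inequality whenever $\vec x_i\neq\vec x'_i$, so the objective would strictly increase at the midpoint, a contradiction. For interior $\lambda$ the maximiser moreover lies in $\W$: if some feasible $y$ weakly dominated it, optimality would force $U_i(\vec y_i)=U_i(\vec x_i)$ for every $i$ with $\lambda_i>0$, i.e.\ for all $i$, so $y$ does not dominate; and the first-order (Lagrange) conditions read $\lambda_i\vec\mu_i(\vec x_i)=\lambda_j\vec\mu_j(\vec x_j)$ for all $i,j$, which is exactly the proportionality in \eqref{contract_c}. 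Continuity of $g$ on $\operatorname{int}\Delta$ follows from the maximum theorem together with single-valuedness. Conversely, for $x\in\W$ with interior bundles, monotonicity makes each $\vec\mu_i(\vec x_i)$ strictly positive and \eqref{contract_c} makes them common positive multiples of one direction, $\vec\mu_i(\vec x_i)=\alpha_i p$ with $\alpha_i>0$; setting $h(x)_i=\alpha_i^{-1}/\sum_j\alpha_j^{-1}$ returns the weight vector for which $x$ solves the above maximisation, so $h=g^{-1}$ on the interiors. This gives the homeomorphism between $\operatorname{int}\W$ and $\operatorname{int}\Delta$.

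It remains to reach the boundary. Monotonicity governs the correspondence there: a Pareto optimum with $\vec x_i$ on $\partial P$ amounts to assigning zero weight to the goods agent $i$ does not hold, so the boundary Pareto optima are precisely limits of interior ones and lie over the faces of $\Delta$. One extends $g$ to $\partial\Delta$ by limits along rays into $\operatorname{int}\Delta$ (note that for $\lambda\in\partial\Delta$ the weighted sum no longer pins down the bundles of zero-weight agents, so a selection must be made), verifies the extension is single-valued, continuous and $\W$-valued, checks bijectivity of $g:\Delta\to\W$, and invokes compactness of $\Delta$. The clean way to execute this — and to ensure that the first-order set \eqref{contract_c} really coincides with the Pareto set even when the $U_i$ are merely quasi-concave, so that $g$ is onto all of $\W$ — is Smale's: assume the differentiable form of strict convexity (nonsingular bordered Hessians) and apply the implicit function theorem to the system $\{\lambda_i\vec\mu_i(\vec x_i)=\lambda_j\vec\mu_j(\vec x_j),\ \sum_{i}\vec x_i=w\}$, obtaining $\W$ as a manifold-with-corners carried diffeomorphically onto $\Delta^{n-1}$; this is the content of \cite{smale_I}.

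The main obstacle is precisely this boundary analysis: away from $\partial\Delta$ the homeomorphism is an elementary consequence of strict convexity and the maximum theorem, but on $\partial\Delta$ one must (i) pick, among the many maximisers of the degenerate weighted sum, a single-valued continuous selection, (ii) check it is still Pareto optimal (no zero-weight agent can be improved costlessly), and (iii) rule out two weight vectors mapping to one allocation. An alternative topological route, should one wish to avoid welfare weights, is through the utility-possibility frontier $F=\{\vec U(x):x\in\W\}$: the map $\vec U$ sends $\W$ homeomorphically onto $F$ (injectivity from strict convexity, the homeomorphism property from compactness of $\W$, which is closed in $\bar W$), after which one radially projects $F$ from the corner of minimal utilities $\underline u=(U_i(\vec 0))_{i\in N}$, which lies strictly below $F$ by monotonicity. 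Injectivity of the radial map is immediate from the definition of a Pareto frontier; surjectivity onto $\Delta^{n-1}$ and continuity of the inverse near $\partial\Delta^{n-1}$ are again the delicate points, and are where the special structure of an exchange economy (equivalently, Smale's regularity hypothesis) must be used.
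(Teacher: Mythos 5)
Your plan is the classical Negishi/welfare--weight scalarisation, which is not what the paper relies on (the paper simply invokes Smale's global-analysis result in \cite{smale_I}), and as written it has two genuine gaps exactly where the theorem's content lies. First, the interior uniqueness step is invalid under the stated hypotheses. The paper assumes \emph{strictly convex preferences}, i.e.\ strict quasi-concavity of the $U_i$, not concavity. From $U_i\bigl(\tfrac12(\vec x_i+\vec x'_i)\bigr)>\min\{U_i(\vec x_i),U_i(\vec x'_i)\}$ you cannot conclude that $\sum_i\lambda_iU_i$ strictly increases at the midpoint: if the two maximisers give different utility profiles (say $\lambda=(\tfrac12,\tfrac12)$, $U_1(\vec x_1)=10$, $U_1(\vec x'_1)=0$, $U_2(\vec x_2)=0$, $U_2(\vec x'_2)=10$), the midpoint need only beat the coordinatewise minima, which can leave the weighted sum far below the maximum. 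That inference requires (strict) concavity of the $U_i$, which is not among the hypotheses, and without it both uniqueness and the continuity-via-maximum-theorem step collapse. Second, surjectivity: for merely quasi-concave utilities the utility possibility set need not be convex, so maximisers of $\sum_i\lambda_iU_i$ need not sweep out all of the Pareto set; the first-order proportionality in \eqref{contract_c} characterises Pareto optima here, but not all of them arise as weighted-sum maximisers. You acknowledge this and at precisely that point ``apply Smale'', i.e.\ you defer to the very theorem being proved.

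The same deferral occurs at the boundary: extending $g$ to $\partial\Delta$ by limits along rays, and verifying single-valuedness, continuity, Pareto optimality of the limit, and injectivity there, are listed as tasks rather than carried out, and they are the delicate part (on $\partial\Delta$ the zero-weight agents' bundles are not pinned down, and ray limits may depend on the ray). Your alternative route through the utility-possibility frontier likewise asserts compactness of $\W$ without argument, even though $\W$ is defined inside the open set $W$, and again leaves surjectivity and boundary continuity ``where Smale's regularity hypothesis must be used''. So the proposal is a reasonable plan under the stronger assumption of strictly concave utilities, but under the proposition's actual hypotheses (convexity and monotonicity of preferences) it neither fixes the interior argument nor supplies the global/boundary analysis; the paper's own treatment is simply to cite Smale's differential-topological proof, which works directly with the gradient conditions \eqref{contract_c} and the implicit function theorem rather than with welfare weights.
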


\noindent
For a proof of Proposition \ref{smale_hom} see \cite{smale_I}. 
The assumptions in  Proposition \ref{smale_hom} are standard in economics. Furthermore if we assume convexity of the utility function and of the commodity space we have that the set of Pareto Optima is diffeomorphic to a closed $(n-1)$ simplex.
It has been shown that if preferences are $C^2$ and convex it is possible to find utility representations that admit a convex space, for an exhaustive discussion and proofs see \cite{diff_approach}.
Note that in our case the assumptions of Proposition  \ref{smale_hom} are satisfied: the state space of interest is an open subset of an affine subspace with compact closure in $\R^{m \times n}$ \citep{smale_I}. The convexity assumption makes the problem much easier to deal with, but in case this assumption is relaxed we can still characterise the Pareto Set, that will be an ($n-1$) stratified set, that is a manifold with borders and corners, see \cite{YHWan} and \cite{de_melo}.
Note finally that adding an error term to equation \eqref{contract_c} we get a diffusion process similar to the one analyzed by \cite{anderson2004noisy},  generalized to networks by \cite{bervoets2016learning} and, outside economics, by \cite{robert2016dynamics}.

\subsection{Trading}\label{para: trading}

Define \emph{trading} between agents in $N$ as a continuous dynamic over the endowments, which is based on marginal utilities.
Formally it will be a set of differential equations of the form:
\begin{equation}    \label{def_dyn}
\frac{d {x}_{i,t}}{d t} = f_i \left( {\bf M}_t \right) \ \ ,
\end{equation}
where function $f_i$ from $\R_+^{n \times m}$ to $\R^m$, satisfies the following 3 assumptions, for any set $ {\bf M}_t = \left( {\mu}_{1,t} , {\mu}_{2,t} , \dots {\mu}_{n,t} \right)$ of feasible marginal utilities:

\begin{itemize}
\item {\bf Zero sum:} the sum $ \sum_{i=1}^n {f}_i$ is equal to the null vector ${0}$.
\item {\bf Trade:} if there are at least two vectors of marginal utilities, ${\mu}_{i,t}$ and ${\mu}_{j,t}$, which are linearly independent, then at least one between ${f}_i $ and ${f}_j$ is different from ${0}$.
\item {\bf Positive gradient:} for any agent $i$ it will always be the case that ${\mu}_{i,t} \cdot {f}_i \geq 0 $, with strictly positive sign if there is trade.
\end{itemize}

\noindent
The assumption of \emph{zero sum trade} guarantees that we are in a pure exchange economy without consumption nor production of new goods, as the amount of all the goods remain unchanged at any step of the process.
The assumption of \emph{trade} guarantees that there is actually exchange, unless we are in a Pareto optimal allocation, where the marginal rate of substitution between any two goods would be the same for any couple of agents.
Finally, the assumption of \emph{positive gradient} guarantees that any marginal exchange represents a Pareto improvement.
That is because

\begin{equation} \label{marg_ut}
\begin{split} 
\frac{d U_i}{d t} & =  \sum_{k=1}^m \frac{d U_i}{d x_{ik}} \frac{d x_{ik}}{d t}  \\
& = {\mu}_{i,t} \cdot f_i \left( {\bf M}_t \right) \geq 0 \\
\end{split}
\end{equation}

{\color{black}
\noindent 
An allocation $\bold{X}^*$ (the  $(n \times m)$ matrix representing quantities of each of the $m$ goods for each of the $n$ agents) is an \emph{equilibrium} of the system in equation (\ref{def_dyn}) if $f_i(\bold{X}^*) = 0$ for all $i = 1,\dots, n$ and a solution is a function $x(t,\bold{X}_0): \R \times \R_+^{n \times m} \rightarrow \R_+^{n \times m}$ where  $\bold{X}_0$ is the initial condition at time 0. 

\noindent
\begin{definition}
The solution $x(t, \bf{X}_0^*)$ is \emph{stable} if for every $\epsilon > 0$ there exist a $\delta$ such that:
\begin{equation}
| \bold{X}_0 - \bold{X}^*_0| \le \delta \, \, \implies |x(t,\bold{X}_0) - x(t,\bold{X}^*_0)| \le \epsilon \, \, \forall t \ge 0
\end{equation}
\end{definition}

\noindent
Generalizing \citep{Hahn82}, it is easy to show that all and only the fixed points of the dynamical system defined in Equation \eqref{def_dyn}, are Pareto optimal allocations.
That is because the function
\[
\bar{U} \left( {\bf X}_t \right) \equiv
\sum_{i=1}^n U_i ({x}_{i,t})
\]
can be seen as a potential.
It is bounded in its dominion of all possible allocations, it strictly increases as long as there is trade (i.e.~out of equilibrium), and it is stable when there are no two agents who could both profitably exchange goods between them.
At the limit $\bar{U}$ will converge for sure to a value, say  $\bar{U}^*$, corresponding to an allocation ${\bf X}^*$.
As preferences are strictly convex, there will be no trade in ${\bf X}^*$.

\noindent
The fixed points of the above dynamical system are reached by a sequence of utility increasing, infinitesimally small trades from an initial state, hence the set of the solutions of any such trade mechanism is an open subset of the Pareto Set $\W$ defined in Equation (\ref{contract_c}) \citep{smale_I}.

\noindent
Note that at this stage there are no assumptions restricting endowments not to become negative, that is to say we are not requiring a condition like $\frac{d x_{ik}}{d t} >0$ as $x_{ik} \rightarrow 0$.
This will depend on the initial endowment $ {\bf X}_t$ of the agents and or their utility functions.

\begin{assumption}
As any marginal exchange represents a Pareto improvement, we assume that any Pareto improvement starting from the initial conditions will lie in the region of non--negative endowments.
\end{assumption}

\noindent
Examples that satisfy these properties are the classical Walrasian \emph{t{\^a}tonnement} process, as well as  non-\emph{t{\^a}tonnement} processes, as can be find in \cite{Hahn82} and \cite{HRR75I,HRR75II}.

\section{\emph{Fair trading} between two agents}  \label{sec_2fair}

Let us start by considering $n=2$.
There is an entire family of trading mechanisms satisfying the very general assumptions of zero sum, trade and positive gradient. As we choose a trading mechanism we are implicitly making assumptions on some bargaining rule that has been fixed by the agents participating in the trade. This is a restriction to some extent, still we can choose  different trading mechanisms corresponding to different bargaining solutions that satisfy the assumptions. We define a mechanism that we call \emph{fair trading}, that is based on the  egalitarian solution by \citep{K77}:  whenever there is room for a Pareto improvement, agents trade if and only if they equally split the gains in utility from the trade.

{As in \cite{K77} we are assuming that utility is cardinal, in other words it does not only represent an ordering among alternatives, it also attaches a precise value to alternatives on the same indifference curve. Moreover, we are allowing for interpersonal comparisons and we are assuming that agents have full knowledge of each others' preferences. These strong assumptions imply first of all that our results are not invariant under monotone transformation of the utility function. In this respect we also do not explicitly consider strategic misrepresentation of preferences, in other words we are assuming that agents are always revealing their true utility function. It is worth remembering that \cite{Shapley1969} showed that there is no strongly individually rational ordinal solution to bilateral bargaining problems, and here we are considering a trade mechanism based on a bilateral bargaining solution, as in \cite{K77}.}

\noindent
Trading is bilateral, $N= \{1,2\}$, and $m \geq 2$ goods.
By the zero sum property we have that ${f}_1 = -{f}_2$.
We are restricting our attention to the case where marginal utility from trading is equally split among the two agents.
The Pareto improvement from trading is defined in Equation \eqref{marg_ut}, so we are requiring that:
\[
{\mu}_{1,t} \cdot f_1 \left( {\mu}_{1,t} , {\mu}_{2,t} \right) = {\mu}_{2,t} \cdot f_2 \left( {\mu}_{1,t} , {\mu}_{2,t} \right) \ \ .
\]
By the zero sum property this is satisfied if
\[
\left( {\mu}_{1,t} + {\mu}_{2,t} \right) \cdot f_1 \left( {\mu}_{1,t} , {\mu}_{2,t} \right) = 0 \ \ ,
\]
which simply means that marginal trade has to be orthogonal to the sum of marginal utilities.

\noindent
There is a full sub--space of dimension $m-1$ that is orthogonal to the sum of the two marginal utilities.
Here we consider a single element that lies in the sub--plane generated by ${\mu}_{1,t}$ and $ {\mu}_{2,t}$.
We assume that trade for agent 1, ${f}_1$, is the orthogonal part of  ${\mu}_{1,t}$ with respect to ${\mu}_{1,t} + {\mu}_{2,t}$ (or the vector rejection of ${\mu}_{1,t}$ from  ${\mu}_{1,t} + {\mu}_{2,t}$).
In formulas it is
\begin{equation}  \label{fair_f}
f_1 \left( {\mu}_{1,t} , {\mu}_{2,t} \right) =
{\mu}_{1,t} - \frac{{\mu}_{1,t} \cdot \left( {\mu}_{1,t} + {\mu}_{2,t} \right) }{|{\mu}_{1,t} + {\mu}_{2,t}|^2}
\left( {\mu}_{1,t} + {\mu}_{2,t} \right) \ \ 
\end{equation}
where $| \cdot |$ is the Euclidean norm in $\R^m$.  {\color{black} Generalizing Equation (\ref{fair_f}) we call $f_i(\mu_{i,t}, \mu_{j,t})$ \emph{fair trading} between agent $i$ and $j$.}

\begin{proposition} 
The fair trading mechanism between two agents defined in Equation \eqref{fair_f} satisfies zero sum,  trade and positive gradient.
\end{proposition}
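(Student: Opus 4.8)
The plan is to check the three axioms in turn, leaning on the geometric reading of $f_1$ and $f_2$ as vector rejections. Before anything else I would note that the formula \eqref{fair_f} is well posed: since $U_1$ and $U_2$ are strictly increasing, every feasible marginal utility vector has strictly positive components, so $\vec{\mu}_{1,t}+\vec{\mu}_{2,t}\neq\vec{0}$ and the denominator $|\vec{\mu}_{1,t}+\vec{\mu}_{2,t}|^2$ never vanishes. \textbf{Zero sum} then needs no new work: it is exactly the content of the preceding proposition, which shows $f_1\left(\vec{\mu}_{1,t},\vec{\mu}_{2,t}\right)+f_2\left(\vec{\mu}_{1,t},\vec{\mu}_{2,t}\right)=\vec{0}$.

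For \textbf{positive gradient} the key observation is that, by construction, $\vec{f}_1$ is the component of $\vec{\mu}_{1,t}$ orthogonal to the line spanned by $\vec{s}:=\vec{\mu}_{1,t}+\vec{\mu}_{2,t}$. Writing $\vec{\mu}_{1,t}=\vec{p}_1+\vec{f}_1$, where $\vec{p}_1$ is the orthogonal projection of $\vec{\mu}_{1,t}$ onto $\vec{s}$, one checks $\vec{f}_1\cdot\vec{s}=\vec{\mu}_{1,t}\cdot\vec{s}-(\vec{\mu}_{1,t}\cdot\vec{s})=0$, hence $\vec{f}_1\perp\vec{p}_1$ and therefore $\vec{\mu}_{1,t}\cdot\vec{f}_1=|\vec{f}_1|^2\geq 0$. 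The same computation with indices swapped gives $\vec{\mu}_{2,t}\cdot\vec{f}_2=|\vec{f}_2|^2\geq 0$ (and $|\vec{f}_2|=|\vec{f}_1|$ by zero sum). Equivalently, one can expand $\vec{\mu}_{1,t}\cdot\vec{f}_1=|\vec{\mu}_{1,t}|^2-(\vec{\mu}_{1,t}\cdot\vec{s})^2/|\vec{s}|^2$ and invoke Cauchy--Schwarz on $\vec{\mu}_{1,t}$ and $\vec{s}$; I prefer the projection form because it makes the equality case transparent.

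For \textbf{trade} I would prove that $\vec{f}_1=\vec{0}$ if and only if $\vec{\mu}_{1,t}$ and $\vec{\mu}_{2,t}$ are linearly dependent. Indeed $\vec{f}_1=\vec{0}$ says $\vec{\mu}_{1,t}$ is a scalar multiple of $\vec{s}$; subtracting, $\vec{\mu}_{2,t}$ is then also a scalar multiple of $\vec{s}$, so $\vec{\mu}_{1,t}$ and $\vec{\mu}_{2,t}$ are collinear. Conversely, if $\vec{\mu}_{2,t}=c\,\vec{\mu}_{1,t}$ (with $c>0$, both being positive), then $\vec{s}=(1+c)\vec{\mu}_{1,t}$ is parallel to $\vec{\mu}_{1,t}$ and the rejection $\vec{f}_1$ vanishes. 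Hence whenever $\vec{\mu}_{1,t},\vec{\mu}_{2,t}$ are linearly independent we get $\vec{f}_1\neq\vec{0}$, and by zero sum $\vec{f}_2=-\vec{f}_1\neq\vec{0}$ as well, which is the trade property for $n=2$. The same dichotomy shows that ``there is trade'' ($\vec{f}_1\neq\vec{0}$, i.e.\ $\vec{\mu}_{1,t},\vec{\mu}_{2,t}$ linearly independent) is equivalent to $|\vec{f}_i|^2>0$, which by the previous step gives $\vec{\mu}_{i,t}\cdot\vec{f}_i>0$, closing the strict part of positive gradient.

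I do not expect a real obstacle here; the only points that need care are (i) the nonvanishing of the denominator, which is where strict monotonicity of the $U_i$ enters, and (ii) lining up the equivalence ``linearly dependent $\Leftrightarrow$ no trade'' with the statement of the \emph{trade} axiom, so that the strict sign in \emph{positive gradient} is obtained on exactly the intended set of configurations.
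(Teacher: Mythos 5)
Your proof is correct and follows essentially the same route as the paper: it verifies the three properties directly from the rejection formula \eqref{fair_f}, with zero sum imported from the preceding proposition. The only differences are cosmetic — you obtain positive gradient from the identity $\vec{\mu}_{i,t}\cdot\vec{f}_i=|\vec{f}_i|^2$ rather than by invoking the Cauchy--Schwarz inequality (the two are equivalent, as you note), and you spell out the linear-dependence equivalence for the trade property, together with the nonvanishing of the denominator, which the paper treats as immediate.
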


\begin{proof}
Fair trading specified in \eqref{fair_f} satisfies \emph{zero-sum}, as the instantaneous trade of one agent is equal to the additive inverse of the instantaneous trade of the other agent:

\[
f_2 \left( {\mu}_{1,t} , {\mu}_{2,t} \right) =
{\mu}_{2,t} - \frac{{\mu}_{2,t} \cdot \left( {\mu}_{1,t} + {\mu}_{2,t} \right) }{|{\mu}_{1,t} + {\mu}_{2,t}|^2}
\left( {\mu}_{1,t} + {\mu}_{2,t} \right) = -f_1 \left( {\mu}_{1,t} , {\mu}_{2,t} \right) 
\]
because
\[
f_1 \left( {\mu}_{1,t} , {\mu}_{2,t} \right) + f_2 \left( {\mu}_{1,t} , {\mu}_{2,t} \right)
= \left( {\mu}_{1,t} + {\mu}_{2,t} \right)  - \frac{ |{\mu}_{1,t} + {\mu}_{2,t}|^2}{|{\mu}_{1,t} + {\mu}_{2,t}|^2}
\left( {\mu}_{1,t} + {\mu}_{2,t} \right)
= {0} \ \ .
\]
 To check that the \emph{trade} condition is satisfied note that $f_1 \left( {\mu}_{1,t} , {\mu}_{2,t} \right) =0$ only if ${\mu}_{1,t} = k {\mu}_{2,t}$ for some $k \in \R$, that is when ${\mu}_{1,t}$ and ${\mu}_{2,t}$ are linearly dependent. 

\noindent
Positive gradient requires that:
\[
{\mu}_{1,t} \cdot \left(
{\mu}_{1,t} - \frac{{\mu}_{1,t} \cdot \left( {\mu}_{1,t} + {\mu}_{2,t} \right) }{|{\mu}_{1,t} + {\mu}_{2,t}|^2}
\left( {\mu}_{1,t} + {\mu}_{2,t} \right)  \right) \geq 0
\]
For the above inequality to be satisfied it suffices that $| {\mu}_{1,t} \cdot \left( {\mu}_{1,t} + {\mu}_{2,t} \right) | \leq |{\mu}_{1,t}| |{\mu}_{1,t} + {\mu}_{2,t}|$.
The latter alway holds as it is the Cauchy--Schwarz inequality.  As long as  ${\mu}_{1,t}$ and $ {\mu}_{2,t}$ are linearly independent  $| {\mu}_{1,t} \cdot \left( {\mu}_{1,t} + {\mu}_{2,t} \right) | < |{\mu}_{1,t}| |{\mu}_{1,t} + {\mu}_{2,t}|$ and so $\mu_{1,t} f_1 \left( {\mu}_{1,t} , {\mu}_{2,t} \right) > 0$ as long as there is trade. When ${\mu}_{1,t}$ and $ {\mu}_{2,t}$ are linearly dependent  $| {\mu}_{1,t} \cdot \left( {\mu}_{1,t} + {\mu}_{2,t} \right) | = |{\mu}_{1,t}| |{\mu}_{1,t} + {\mu}_{2,t}|$, so $\mu_{1,t} f_1 \left( {\mu}_{1,t} , {\mu}_{2,t} \right) = 0$ when there is no trade.
\end{proof}

\noindent
Note that \emph{zero-sum}, \emph{trade} and \emph{positive gradient} would be satisfied for any $\alpha f_1 \left( {\mu}_{1,t} , {\mu}_{2,t} \right)$, with $\alpha >0 $, where the parameter $\alpha$ represents the \emph{speed} at which the dynamical system is moving, so there will be no loss in generality in assuming it equal to 1.

\noindent
So, the fair trading mechanism is a bilateral pure exchange mechanism satisfying the required three assumptions. The two agents trade over $m \geq 2$ goods,
starting from some initial allocation ${\bf X}_0 \in \R^{m \times 2}$ and evolving according to the following system of differential equations in matrix form, based on Equations \eqref{def_dyn} and \eqref{fair_f}:
\begin{equation} \label{fair_mechanism}
\frac{d {\bf X}_t}{d t} = 
\left(
{\mu}_{1,t} - \frac{{\mu}_{1,t} \cdot \left( {\mu}_{1,t} + {\mu}_{2,t} \right) }{|{\mu}_{1,t} + {\mu}_{2,t}|^2}
\left( {\mu}_{1,t} + {\mu}_{2,t} \right) ,
{\mu}_{2,t} - \frac{{\mu}_{2,t} \cdot \left( {\mu}_{1,t} + {\mu}_{2,t} \right) }{|{\mu}_{1,t} + {\mu}_{2,t}|^2}
\left( {\mu}_{1,t} + {\mu}_{2,t} \right)
  \right) \ \ .
\end{equation}
This dynamical system is well defined, as ${\mu}_{1,t}$ and ${\mu}_{2,t}$ are defined in ${\bf X}_t$, and are based on the utilities $U_1$ and $U_2$.
However, this system is not linear in ${\bf M}_t$.

\begin{figure}[h]
\begin{center}
\includegraphics*[width=6cm]{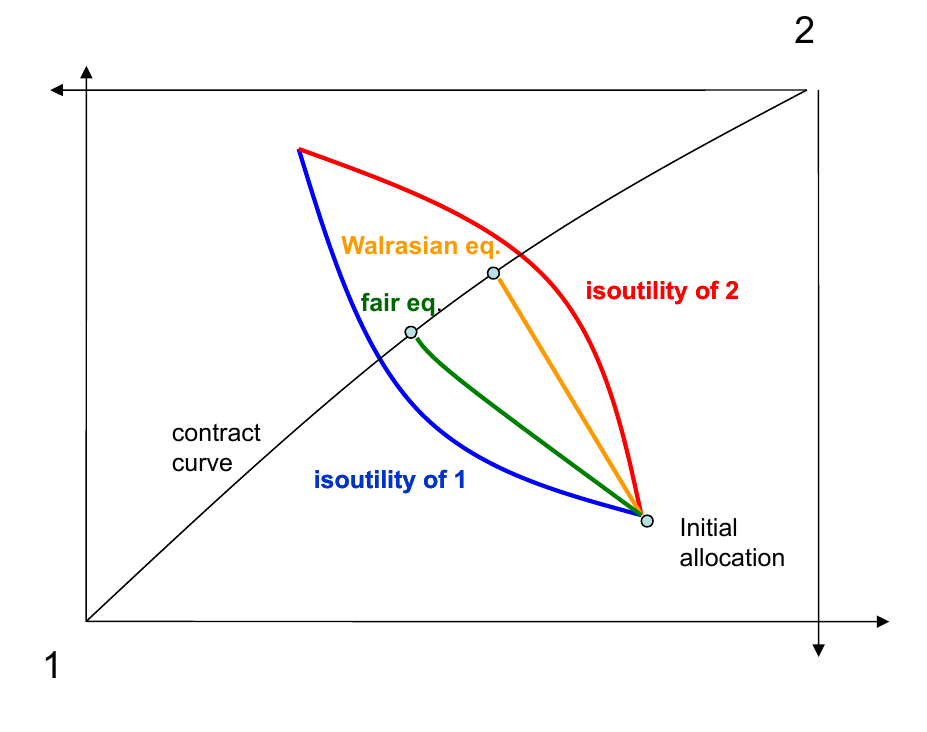}
\includegraphics*[width=7cm]{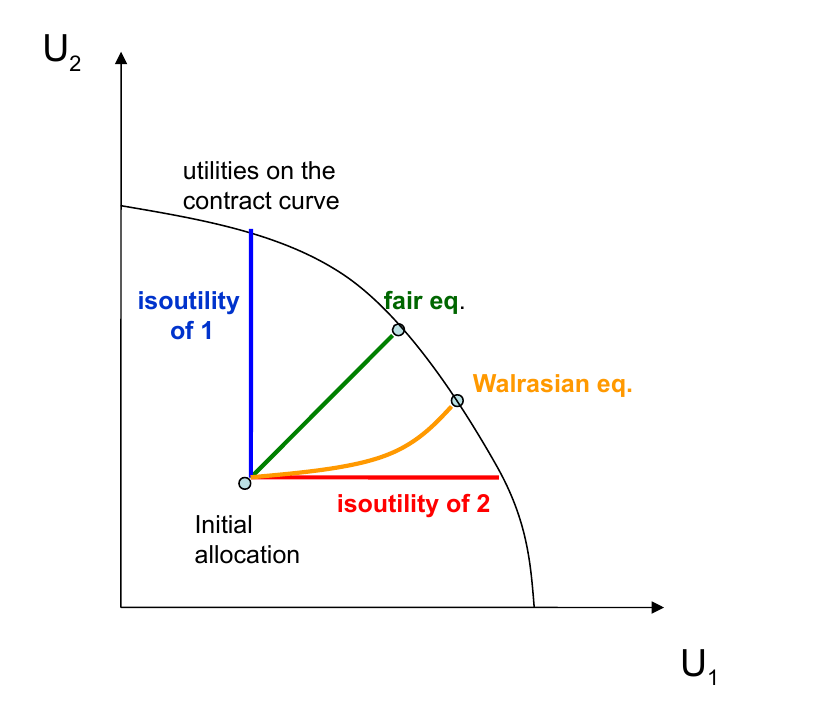}
\caption{Example of the difference between a Walrasian equilibrium and a fair equilibrium in the Edgeworth box and in the space of utilities.}
\label{fig:n=2}
\end{center}
\end{figure}

To have a graphical intuition for our approach, consider Figure \ref{fig:n=2}, where we have $m=2$ (adapted from \citealt{foley}).
In the left panel we represent allocations of the two goods, while in the right panel we represent utilities of the two agents.
The red and the blue lines (in both panels) are the boundaries of the Pareto improving allocations.
The yellow curve is the {Walrasian map from initial endowments to the Walrasian equilibrium allocation: it is a straight line in the Edgeworth box (left), but not necessarily in the space of utilities (right).}
The green line is the path obtained with fair trading: it is a straight line with $45^{\circ}$ inclination in the right panel.
{\color{black} The above example is just illustrative, and we are not claiming that a the limit points of the Fair Trade dynamics cannot coincide with the Walrasian allocation, even if this is generally not true.}


\section{The network environment}\label{para:network}

What happens if $m \leq n + 1$, if for instance there are only 2 goods and many agents?
In this case we consider a market mechanism based on a \emph{weighted, undirected network} $G$ that allows for distinct couples to match and trade according to the unique fair trading mechanism defined in Section \ref{sec_2fair}.\footnote{%
In Appendix \ref{appendix_more} we show that, instead of assuming a network, we could increase the number of goods, if we want to extend the definition of \emph{fair trading}.}
{
The trade network $G$ is identified by the symmetric adjacency matrix ${\bf W} = (w_{ij})$, where $w_{ii} = 0$ and  $w_{ij} \in [0,1]$ and we assume that the sum of all edge weights in $G$ is equal to 1. The importance of a node $i$ node in terms of the total weight of their connections is given by the \emph{strength},  defined as $s_i = \sum_j w_{ij}$ \citep{Barrat3747}. Note that, given our assumption on the sum of edge weights  $\sum_i s_i = 2$.
The weight of each connection represents the probability that the two agents will trade.}

\noindent
{\color{black} Bilateral trade between agents $i$ and $j$ is given by $f_i(\mu_{i,t}, \mu_{j,t})$ as in Equation \eqref{fair_f}, and trade for agent $i$ on the network $G$ is defined as $f_i^{G} = \sum_{j \in N\setminus\{i\}} w_{ij} f_i(\mu_{i,t}, \mu_{j,t})$, namely the weighted sum of $i$'s bilateral trades with all the agents connected with $i$. The resulting dynamical system is}\footnote{%
Here we consistently define that $f_i \left( {\mu}_{i,t} , {\mu}_{i,t} \right) = {0}$.}
\begin{equation} \label{dinsys0}
\frac{d {\bf X}_t}{d t}  = f^G ({\bf X}(t),{\bf W})
\end{equation}

\noindent 
where:

\begin{equation} \label{dinsys1}
f^G ({\bf X}(t),{\bf W}) = 
 \left(
\sum_{i \in N\setminus{\{1\}}} w_{1i} f_1 \left( {\mu}_{1,t} , {\mu}_{i,t} \right) ,
\sum_{i \in N\setminus{\{2\}}} w_{2i} f_2 \left( {\mu}_{2,t} , {\mu}_{i,t} \right) ,
\dots,
\sum_{i \in N\setminus{\{n\}}} w_{ni}  f_n \left( {\mu}_{n,t} , {\mu}_{i,t} \right)
\right) \ \ .
\end{equation}

\noindent
As for the case of Equation \eqref{fair_mechanism}, this system is not linear.

\smallskip
\begin{proposition}
The fair trading mechanism on a network satisfies zero sum, trade and positive gradient properties.
\end{proposition}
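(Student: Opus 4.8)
The plan is to reduce all three properties to the bilateral case of Section~\ref{sec_2fair}, using that in \eqref{dinsys} each $\vec{f}_i$ is a \emph{non-negative} linear combination $\vec{f}_i=\sum_{j\in N} w_{ij}\, f'_i(\vec{\mu}_{i,t},\vec{\mu}_{j,t})$ of pairwise fair trades, with weights $w_{ij}=\frac{p_i+p_j}{n-1}\ge 0$ that are symmetric in $i,j$ and the convention $f'_i(\vec{\mu}_{i,t},\vec{\mu}_{i,t})=\vec{0}$, so the diagonal terms drop out. Since ``zero sum'' and ``positive gradient'' are both preserved under such combinations, most of the work is already done once the $n=2$ statements are invoked.

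For \textbf{zero sum} I would expand $\sum_i\vec{f}_i$ as a double sum and regroup by unordered pairs $\{i,j\}$; the contribution of each pair is $w_{ij}\big(f'_i(\vec{\mu}_{i,t},\vec{\mu}_{j,t})+f'_j(\vec{\mu}_{j,t},\vec{\mu}_{i,t})\big)$, which vanishes by the Proposition establishing that the instantaneous trades of a bilateral fair trade are additive inverses — and here it is exactly the symmetry of $w_{ij}$ that makes the cancellation go through. For \textbf{positive gradient} I would take the inner product with $\vec{\mu}_{i,t}$, obtaining $\vec{\mu}_{i,t}\cdot\vec{f}_i=\sum_j w_{ij}\,\vec{\mu}_{i,t}\cdot f'_i(\vec{\mu}_{i,t},\vec{\mu}_{j,t})$, a sum of non-negative terms by the Cauchy--Schwarz estimate already used in the bilateral proof. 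For strictness I would note that the $j$-term vanishes only if $w_{ij}=0$ or $\vec{\mu}_{i,t}$ and $\vec{\mu}_{j,t}$ are linearly dependent, and in the latter situation $f'_i(\vec{\mu}_{i,t},\vec{\mu}_{j,t})=\vec{0}$ as well; hence $\vec{\mu}_{i,t}\cdot\vec{f}_i=0$ already forces $\vec{f}_i=\vec{0}$, so whenever agent $i$ actually trades the gradient is strictly positive and $dU_i/dt>0$ by \eqref{marg_ut}.

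The \textbf{trade} property is where I expect the only real difficulty. Given $i,j$ with $\vec{\mu}_{i,t},\vec{\mu}_{j,t}$ linearly independent: if $w_{ij}>0$, the argument above makes the $j$-term of $\vec{\mu}_{i,t}\cdot\vec{f}_i$ strictly positive and all others $\ge 0$, so $\vec{\mu}_{i,t}\cdot\vec{f}_i>0$ and $\vec{f}_i\ne\vec{0}$. The delicate case is $w_{ij}=0$, i.e.\ $p_i=p_j=0$ and both agents peripheral, so $i$ and $j$ never trade directly; there I would use that the network generated by $\vec{p}$ is connected — some hub $h$ has $p_h>0$ (the $p$'s are probabilities, not all zero), so $w_{ih}=w_{jh}=\frac{p_h}{n-1}>0$ — together with the strict positivity of marginal utilities, which makes proportionality an equivalence relation among them, so $\vec{\mu}_{i,t}$ and $\vec{\mu}_{j,t}$ cannot both be proportional to $\vec{\mu}_{h,t}$. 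Applying the $w>0$ case to whichever of $i,j$ has gradient not proportional to the hub's yields $\vec{f}_i\ne\vec{0}$ or $\vec{f}_j\ne\vec{0}$.

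As a sanity check one can also observe that $\bar{U}(\mathbf{X}_t)=\sum_i U_i(\vec{x}_{i,t})$ is again a bounded potential for \eqref{dinsys}, strictly increasing off the contract curve $\W$ — a corollary of the positive-gradient and trade properties just established — which reassures that, as in the bilateral case, the fixed points of \eqref{dinsys} coincide with the Pareto optima. But the proof I would actually write is the short reduction to Section~\ref{sec_2fair} sketched above, the one genuinely subtle point being the peripheral-pair case of the trade condition, where connectedness of the generated network is indispensable.
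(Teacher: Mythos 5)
Your proposal is correct and follows essentially the same route as the paper: zero sum by pairing each matching's bilateral trades (using the symmetry of the weights and $f'_i=-f'_j$), positive gradient as a non-negative combination of the bilateral Cauchy--Schwarz terms, and the trade property by locating a node with $p_h>0$ and noting that, since marginal utilities are strictly positive, $\vec{\mu}_{i,t}$ and $\vec{\mu}_{j,t}$ cannot both be proportional to $\vec{\mu}_{h,t}$, so some positively weighted matching yields a strictly positive utility flow and hence a nonzero $\vec{f}$. Your write-up merely makes explicit a few steps the paper leaves implicit (the case split on $w_{ij}$ and the strictness argument for the gradient), but the underlying argument is the same.
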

\smallskip

\begin{proof}
Zero sum holds as for every couple $i$ and $j$, which is matched with weight $w_{ij}$,
$f_i=-f_j$ by construction, as discussed in Section \ref{sec_2fair}.

\noindent
Trade property also holds: for every couple $i$ and $j$ such that ${\mu}_i$ and ${\mu}_j$ are linearly independent, consider trader $k$ such that $w_{ik}>0$ and  $w_{jk}>0$  so that both $i$ and $j$ trade with $k$. If ${\mu}_i$ and ${\mu}_j$ are linearly independent, then at least one of them is linearly independent with ${\mu}_k$, suppose it is $\mu_j$.
From fair trading between two agents, as discussed in Section \ref{sec_2fair}, we have that the marginal utility of trader $j$ from that matching is strictly increasing.
Then, as no other trading can generate negative marginal utilities, it means that the overall marginal utility of trader $j$ from all matchings is strictly increasing.
And this can happen only if there is trade, i.e.
\[
{f}^G_j = \sum_{i \in N} w_{ij} f_j \left( {\mu}_{j,t} , {\mu}_{i,t} \right) \ne {0} \ \ .
\]
Finally, positive gradient comes from the fact that $f^G_i$ is a linear combination of $f_i$s, so that
\[
{\mu}_{i,t} \cdot f^G_i
= \sum_{j \in N} w_{ij} {\mu}_{i,t} \cdot  f_i \left( {\mu}_{i,t} , {\mu}_{j,t} \right) \ \ ,
\]
which is strictly positive as long as there is trading.
\end{proof}

\section{A second welfare theorem for networks}

In this section we will prove that there is a one to one mapping between the initial conditions (allocations and network) and the state of the system at any point in time, including in the limit points of the dynamics, where the corresponding allocations are in a subset of the Pareto Set.
Also, we will prove that this map has no holes (is simply connected) and so we can provide a version of the second welfare theorem for networks. We start by providing some definitions and recalling some classical results.

\noindent
\begin{definition}
A function $f: \R^n \to \R^m$ is locally Lipschitz continuous on $\R^n$ if for every $R>0$ there exists a constant $L$ such that:
\begin{equation}
|f(x) - f(y)| \le L |x-y| \, \, \forall x,y \in \R^n \text{ such that }  |x|, |y| \le R
\end{equation}
\end{definition}

\noindent
Recall that resources are fixed in the economy at a point $w \in \R^m$, where the $k$-th coordinate is the total quantity of good $k$ in the economy. Given the initial resources, each possible initial allocation ${\bf X}_0$ is in the set  $E=\{ x \in \R_+^{m \times n} : \sum x_i = w_i \}$.

\noindent
\begin{definition}
Define as $\mathcal{W}^G({\bf X}_0)$ the set of limit points of the fair trade dynamics on networks, that is the set of limit points of \eqref{dinsys1} for a given initial condition ${\bf X}_0  \in E$.
\end{definition} 

\noindent
From section \ref{para: trading} we know that  $\mathcal{W}^G ({\bf X}_0)$ is a subset of the Pareto Set $\mathcal{W}$ defined in equation (\ref{contract_c}).

\begin{lemma}
If a function is $C^1$ then it is locally Lipschitz
\end{lemma}

\begin{theorem}\label{C-L}
If $f$ is continuously differentiable, then  there exist a unique solution of the dynamical system $\frac{dx}{dt} = f(x)$ satisfying the initial condition $x(0) = x_0$.
\end{theorem}

\begin{proof}
See for example \cite{hirsch1974differential}.
\end{proof}

\noindent
In what follows we refer to $f_i^G: \R_+^{n \times m} \to \R^{m}$ as the function defining the dynamics of trade for agent $i$ in equation (\ref{dinsys1}), that is, for each agent $i = {1,\dots,n}$, $f_i^G$ is a vector of $m$ components, and each component, for $k = {1,\dots,m}$ is:

\begin{equation}\label{explform}
f^G_{ik} = \sum_{i \ne j}w_{ij} \Bigl(\mu_{ik} - \frac{\sum_{k=1}^m \mu_{ik}(\mu_{ik} +\mu_{jk})
}{\sum_{k=1}^m (\mu_{ik} +\mu_{jk})^2}(\mu_{ik} +\mu_{jk}) \Bigr)
\end{equation}


\noindent
Where $\mu_{ik} = \frac{\partial U_i}{\partial x_{k}}(x_1,\dots,x_k)$.
For the sake of readability we drop time dependency and we refer to the matrix of good quantities for each agent simply as ${\bf X} \in \R_+^{n \times m}$ and to the adjacency matrix that identifies the network as ${\bf W} \in \mathcal{A} \subset \R^{n \times n}$, where $\mathcal{A}$ is the set of $n \times n$ symmetric matrices such that each entry is a number between 0 and 1, all the diagonal entries are equal to 0 and the sum of all entries of the matrix is equal to 2.

\noindent
We now show that keeping the initial allocation of goods fixed, for each Pareto optimum of the fair trade dynamics there exist a network configuration that implements it.

\begin{theorem}\label{WT}
Consider the fair trade mechanism of pure exchange on networks where agents have continuously differentiable utility functions. Keeping initial allocations ${\bf X}_0$ constant, any ${\bf X}^*$ in the set of the limit points of the fair trade dynamics $\mathcal{W}^G ({\bf X}_0) \subset \mathcal{W}$  can be reached through a sequence of trades for some weighted network $G$. Moreover, each weighted network $G$ leads to a different limit point.
\end{theorem}

\begin{proof}
The fact that any point ${\bf X}^*$ in the set of limit points of te fair trade dynamics can be reached for some weighted network $G$ follows from the definition of $\mathcal{W}^G ({\bf X}_0)$. To prove that each weighted network $G$ leads to a different limit point in $\mathcal{W}^G ({\bf X}_0)$, we prove that the map between the initial conditions (both endowments and network configuration) and the limit points of the trade dynamics is a homeomorphism, that is one-to-one and onto, continuous and with continuous inverse. In order to do that we first transform the parameter ${\bf W}$ into initial conditions, and then we show that the trade dynamics is Lipschitz continuous with respect to both ${\bf X}$ and ${\bf W}$.

\medskip
\noindent
Let us start by noting that the dynamical system identified by equation \eqref{dinsys1} has a unique solution given the initial allocation ${\bf X}_0$, for a fixed network ${\bf W}$. This is because each $f_i^G: \R_+^{n \times m} \to \R^{m}$ is continuously differentiable, as we assumed the utility function to be twice continuously differentiable. This can be easily verified checking each of the $k$ components of $f_i^G$ as per equation \eqref{explform}.  It follows that $f^G({\bf X}, {\bf W})$ is at least $C^1$ and by Theorem \ref{C-L} the dynamical system has unique solution.

\medskip
\noindent
Call the solution map $\phi(t,{\bf X}_0, {\bf W})$, unique for each initial condition ${\bf X}_0 \in E \subset \R_+^{n \times m}$. This map is a homeomorphism, that is continuous, one-to-one and with continuous inverse \citep{hirsch1974differential}. We can transform the parameter ${\bf W}$ into initial conditions by introducing a new variable ${\bf S} \in \mathcal{A} \subset \R^{n \times n}$ and imposing that it does not change in time, so that ${\bf S}(t)= {\bf W} $ for all $t$. The system has now variable $\hat{{\bf X}}_t = [{\bf X}_t, {\bf W}]$ and the initial condition is $\hat{\bf{X}}_0 = [{\bf X}_0, {\bf W}]$  We now show that the function $\hat{f}^G(\hat{{\bf X}})$ is Lipschitz in $\hat{{\bf X}}$, so it has a unique solution given initial condition $\hat{{\bf X}}_0$. 
As  ${f}^G$ is Lipschitz in ${\bf{X}}$, $\hat{f}^G$ is Lipschitz in ${\bf{X}}$, so we just need to show that it is Lipschitz in ${\bf W}$ as well. This is straightforward as $\hat{f}^G$ is linear in ${\bf W}$.

\noindent
So $\hat{f}^G$ is Lipschitz continuous both in ${\bf{X}}$ and in ${\bf W}$, which implies that for each initial condition $\hat{\bf{X}}(t_0) = [{\bf X}_0, {\bf W}]$ there is a unique solution, and given our ODE system is autonomous, distinct solutions never cross. Call $\hat{\phi}(t,{\bf X}_0, {\bf W})$ the solution map of  $\hat{f}^G$, for standard arguments this map is a  homeomorphism \citep{hirsch1974differential}. This implies that, given ${\bf X}_0$, changing the weighted network $G$ we reach a different point in $\mathcal{W}^G ({\bf X}_0)$, which concludes our proof.
\end{proof}

\begin{theorem} \label{cont}
If $f$ is Lipschitz continuous in ${\bf X}, {\bf W}$  then the solution $\phi(t,{\bf X}_0,{\bf W})$ is Lipschitz continuous.
\end{theorem}

\begin{proof}
This is a classic result on the continuous dependence of solutions on parameters and initial conditions. See for example \cite{hirsch1974differential}.
\end{proof}

\noindent
Note that both if we change the network $G$ and keep the initial endowments ${\bf X}_0$ fixed, and if we keep the network $G$ fixed and we redistribute initial endowments, the limit points of the dynamic will converge to a point that is in the Pareto set. Obviously, given that the solutions are unique, changing the network or the initial allocations (or both), we will reach distinct points in the Pareto set. In our model we can redistribute initial allocations, or redistribute network connections, or both: in every case the limit allocation is Pareto efficient.

\noindent
{This result can be related to the neutrality theorem proved by \cite{Cornet1983}. The author studies a planning procedure, defined as a dynamics over the space of the admissible allocations of an economy with $n$ agents, controlled by a fixed parameter in the simplex of $\R^n$. Provided that the solution never leaves the space of the feasible allocations and that the utility of the agent is non-decreasing along the procedures, \cite{Cornet1983} proves that every point preferred or indifferent by every agent to the inital allocation can be reached by an appropriate choice of the parameter, so that the planning procedure is neutral in the sense defined by \cite{Champsaur1977}. While the results of Cornet's paper are more general, we can stress some connection with our work: we have a system of ordinary differential equation defined by continuous, non decreasing adjustement functions on a compact set, which limit points are a subset of the Pareto set. The network in our model has a similar role to the parameter that attributes a weight to the agents in Cornet: if an agent is disconnected, her utility does not change during the process, the same that happens to an agent with zero weight in Cornet's planning procedure. Finally our Theorem 3 shows that any point preferred or indifferent by every agent to the initial allocation can be reached \emph{via} the appropriate choice of the network on which exchange takes place.}

\begin{lemma}
The set of the limit point of the dynamics $\W^G({\bf X}_0)$ is simply connected.
\end{lemma}

\begin{proof}
Consider the solution map ${\phi}(t,{\bf X}_0, {\bf W}): E \times \mathcal{A} \rightarrow\W^G({\bf X}_0) \times {\bf W}$. $E \times \mathcal{A}$ is a convex subset of $\R^{n \times m + n \times n}$ as a product of two convex subset of $\R_+^{n \times m}$ and $\R^{n \times n}$ respectively, so $E \times \mathcal{A}$ is simply connected. Being $E \times \mathcal{A}$ and $\mathcal{W}^G({\bf X}_0) \times {\bf W}$ homoeomorphic,  given $E \times \mathcal{A}$ is simply connected a this is a necessary and sufficient condition for $\mathcal{W}^G({\bf X}_0) \times {\bf W}$ to be simply connected, and so $\mathcal{W}^G({\bf X}_0)$ is simply connected
\end{proof}


\noindent
We can characterise the set of limit points of the fair trading dynamics for any ${\bf X}_0 \in E$:

\begin{proposition}
\label{prop_hom}
The set of the limit points of a fair trading dynamics on networks, $\W^G({\bf X}_0)$, for any ${\bf X}_0 \in E$, is a subset of the Pareto Set  $\W$ which is homeomorphic to a closed $(n-1)$ simplex.
\end{proposition}

\begin{proof}
$\W^G({\bf X}_0)$ is a strict subset of $\W$ as the stable point of the trading dynamics are Pareto Optima and all those allocations in $\W$ where agents are worse off than their initial allocation in the dynamics are not in $\W^G({\bf X}_0)$. Recall that the solution map  $\phi(t,{\bf X}_0, {\bf W})$ is continuous in both ${\bf X}_0$ and  ${\bf W}$. Given that the set $\W$ is homeomorphic to a $(n-1)$ simplex and that $\W^G({\bf X}_0)$ is simply connected, $\W^G({\bf X}_0)$ is also homeomorphic to a $(n-1)$ simplex.
\end{proof}


\bigskip

\begin{figure}[h!]
\centering
\includegraphics*[width=5cm]{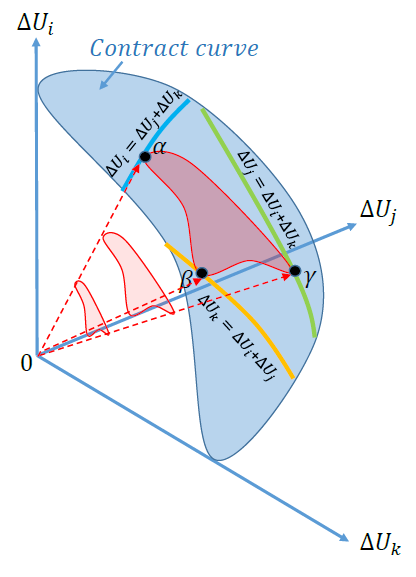}
\caption{Graphical intuition for  Proposition \ref{prop_hom}, representing the projection for three agents of the space of  marginal utilities with respect to the initial allocation.
The light-blue curve represents the Pareto optimal points in $\W$, the red region is the set $\W^G$ of those points that can be reached with our trading mechanism.
If we start with the three distinct stars that have one of those three agents as centers, we will end up in distinct points of $\W$, and hence $\W^G$ has the same dimension of $\W$.}
\label{fig:mani}
\end{figure}

\noindent
It is worth considering an alternative proof of the homeomorphism result in Proposition \ref{prop_hom}, which provides a more intuitive understanding.
Suppose that we have at least three agents, $i$, $j$ and $k$, and that we start from three different star networks: one with  $i$, one with $j$, and one with $k$ in the core.
Unless we start from an allocation that is already Pareto optimal, the three points that we would reach adopting these three networks, and starting from the same initial allocation, cannot coincide.
That is because in a star network, since agents use the fair trading rule, in the limiting point in $\W^G({\bf X}_0)$ the central agent will obtain a marginal utility that is equal to the sum of the marginal utilities of all the other agents.
So, it is impossible that we reach a unique allocation in which half of the overall marginal utilities is given at the same time to each one of the three agents $i$, $j$ and $k$.
Figure \ref{fig:mani} provides a graphical intuition of this argument in the projection of the space of marginal utilities with respect to the initial allocation.

\noindent
The main implication of our result is that we can evaluate the impact of the network structure on the final allocation, as there exists a one-to-one map between each weigted, connected network and the solutions of \ref{dinsys0}. Here we show that, irrespective of initial allocation, agents maximize their utility when they are the core of a star, and that utility in equilibrium is not (always) monotonically increasing in the probability the agent has of being picked to trade.

{\color{black}
\begin{proposition}\label{star_pref}
For any initial allocation of goods any agent strictly prefers to be the core of a star.
\end{proposition}

\begin{proof}
{Consider that agent $i$ final utility as a function of the network can be written as:}

\begin{equation}
{
U_i({\bf W}) = U_i(x_i(0)) + \int_0^\infty \sum_{j\ne i} w_{ij} \mu_i(\phi_i(t,{\bf X}_0,{\bf W})) \cdot f_{i,j}(\phi_i(t,{\bf X}_0,{\bf W}), \phi_j(t,{\bf X}_0,{\bf W}))dt}
\end{equation}

\noindent 
{where $\phi_i(t,{\bf X}_0,{\bf W})$ is agent $i$'s solution path as a function of the network. Given that $ \mu_i(\phi_i(t,{\bf X}_0,{\bf W})) \cdot f_{i,j}(\phi_i(t,{\bf X}_0,{\bf W}), \phi_j(t,{\bf X}_0,{\bf W})) \ge 0$ for any $i$ and $j$ along the solution path and that $\sum_j w_{ij} = 1-\sum_{k \ne i}\sum_{j\ne i} w_{kj}$ agent will maximize $U_i({\bf W})$ when $\sum_j w_{ij} = 1$, that is when agent is the core of a star.}

%
%
%
\end{proof}

\section{Effect of network with Cobb-Douglas preferences}\label{simulations}

In this section we investigate \emph{via} numerical simulations how the network structure affects the limit points of the dynamics. We consider Cobb-Douglas preferences over two goods, and we study networks up to 7 nodes. By constructing the contract curve numerically, we provide a graphical example for the 3 nodes network, showing that the stable points of the fair trading mechanism are homeomorphic to a $n-1$ simplex. Moreover, we illustrate the relationship between inequality in network connections and inequality in equilibrium after trade. At the end of the section we provide some evidence on the computational complexity of our process, reporting the convergence times as the number of agents increase.
{As we stressed previously in this paper, we are assuming cardinal utilities, and this is what allows to compute out-of-equilibrium dynamics and to analyse the effect of network metrics on the utility image of the limit points of the dynamics. In interpreting the results of this section, it is important to keep in mind that these are not invariant on monotonic transformations of the utility functions, which is of course a limitation, as ideally we would like to learn how the network structures affect trade \emph{independently} of the specific utility function.}

\medskip
\noindent
As we proved in Theorem \ref{WT}, each network, given an initial point in the commodity space, can be mapped into a solution which, in the limit, converges to a point on the contract curve. Here we provide a graphical illustration of this result.

\noindent
Suppose that agents have a Cobb-Douglas utility function with constant return to scale: $U_i(x)=x_{i,1}^{\alpha_i}x_{i,2}^{1-\alpha_i}$. This implies that the functions are concave, and that the Pareto Set is a curved $(n-1)$ simplex \citep{strat_pareto}.

%

\begin{figure}[h]
\centering
\includegraphics*[width=14cm]{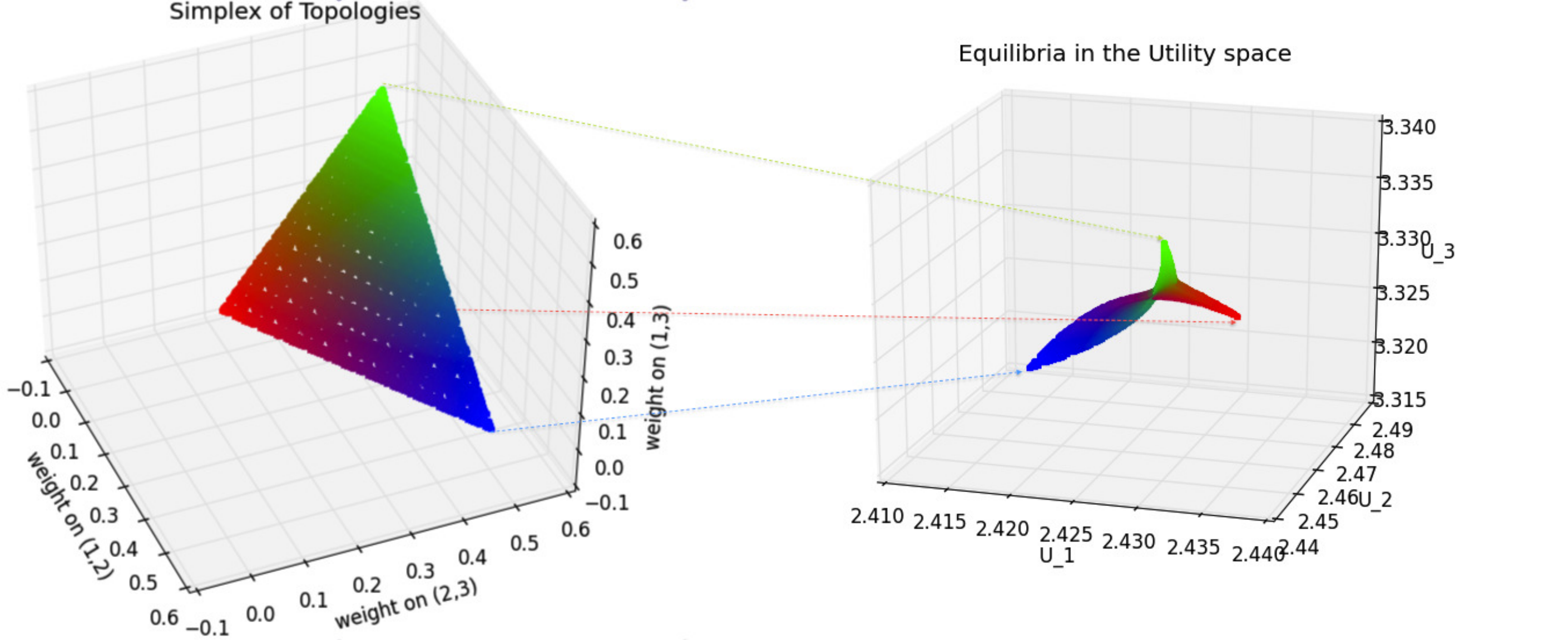}
\caption{Mapping between simplex of representing the space of network configurations and the corresponding equilibria. Only the three vertices are shown, map is according to colours.}
\label{fig:simp}
\end{figure}

\noindent
The leftmost simplex in figure \ref{fig:simp} represents the space of network configurations, each point in that space represents a weighted graph over three nodes: the barycentric coordinates of a point in the simplex correspond to the weight of the network's edges. Each network is then mapped to the corresponding equilibrium of the dynamical system defined by the fair trading mechanism, represented in the space of utilities on the right-hand side of the figure. The map between the two spaces is visualised through colours, a point on the simplex on the left (network) reaches the equilibrium level of utility represented by a point of the same color in the space of utilities. The figure on the right, that is the set of utilities in equilibrium, is a curved 2-simplex, with the vertices of the simplex of networks that are mapped to the vertices of the set of utilities each agent maximizes her utility when is the core of a star, as we showed in Proposition (\ref{star_pref}).

\noindent
In the case represented in  Figure \ref{fig:simp} utility functions are determined by $\alpha_1=0.5$, $\alpha_2=0.4$, $\alpha_3=0.6$, while the initial allocations are such that agent three has the highest endowment of both goods, agent two has  the lowest endowment of good 1 and endowment of good 2 higher than agent 1 that is $x_{3,1} > x_{1,1}>x_{2,1}$ and $x_{3,2}>x_{2,2} >x_{1,2}$. 

\noindent
The numerical examples provide illustration of our theoretical results: the map between the networks and the set of equilibria  is continuous, and there is a homeomomorphism between the simplex of topologies and the set of equilibria: each initial network is continuously mapped through our dynamical process described in Equation \eqref{dinsys0} into a point of the curved simplex representing the set of limit points of the dynamics.

\medskip
\noindent
{In order to investigate how the network structure affects the equilibrium we compute the trading dynamics for 61367 networks, letting the number of nodes vary from 3 to 7, and we explore the impact of standard network metrics on the utility gain in equilibrium. Of the $2^{n(n-1)/2}$ possible graphs on $n$ nodes we consider simple non-isomorphic ones, both connected and not connected. We include in our computations all simple connected non isomorphic (SCnI) graphs and the subset of simple non-connected isomorphic graphs (SnCnI) such that there are no isolated nodes.  For each network we assume equal edge weights, setting the weight to 1 divided the number of edges.
As in the section on Pairwise Stability, we assume that all players have the same preferences over two goods $x_1$, $x_2$, represented by a Cobb-Douglas function $U(x_1,x_2) = x_1^{0.5}x_2^{0.5}$ and that initial endowment can be either $e_1 = (1,2)$ or $e_2 = (2,1)$.
Consider that, because of the assumption of homogeneous preferences, if all agents have the same endowment no trade will happen, so we exclude this scenario in our experiments. For each network on $n$ nodes we let the number of agents who have endowment $e_1$ vary from 1 to $n/2$ if $n$ even and $n/2-1$ if odd, and we compute the trade dynamics for all permutations of endowments.
Table \ref{tabnets} reports the number of simple connected non-isomorphic graphs when the number of nodes varies from 3 to 7 and the total number of networks after permuting for initial endowments.}

\begin{table}[htb]
    \centering
\sisetup{table-number-alignment = center, 
         table-space-text-pre ={(},
         table-space-text-post={\textsuperscript{***}},
         input-open-uncertainty={[},
         input-close-uncertainty={]},
         table-align-text-pre=false,
         table-align-text-post=false}
\begin{threeparttable}
    \caption{Simple connected non-isomorphic graphs}
    \label{tabnets}
\begin{tabular}{c 
                S[table-format=-2.3] 
                S[table-format=-1.4] 
                S[table-format=-1.5] 
                S[table-format=-1.5]
                S[table-format=-1.5]
                 }
\toprule

   {\# of nodes}       &   {\# of SCnI graphs}   &    {\# of SnCnI graphs}   &   {\# of networks in the experiment}      \\
\midrule
   3       &   2  & 0  & 6 \\
  4           & 6  &  1 &  70            \\
 5       &    21  & 2   & 345    \\
 6         & 112 & 10 & 5002 \\
  7           &  853  & 35   & 55944     \\
\midrule
Total          & 994  & 48  & 61367     \\
\bottomrule
\end{tabular}
\end{threeparttable}
\end{table}

\noindent
{To explore the role of the network structure on the trade dynamics we consider standard network and node metrics, that we define here.
As a measure of the number of connections in the network we use density, which is defined as the number of edges $m$ over the total number of possible edges between $n$ nodes,}
\begin{equation}
{
d = \frac{m}{n(n-1)}
}
\end{equation}

\noindent
{To measure transitivity in a network a common metric is the clustering coefficient \citep{WattsStrogatz1998}, which measures the fraction of triangles over the total number of triads in the network. We adopt a slightly different definition of clustering coefficient, namely the fraction of triangles where one node has a different endowment than the other two over the total number of triads in the network.}

\noindent
{Given there are two types of endowments, a connection can be either between nodes with the same endowment or between nodes with different endowments. We measure the similarity of connections with respect to initial endowments using assortativity  \citep{Newman2002}, defined as:
}
\begin{equation}
{
r = \frac{\text{Tr} (\boldmath{M})  - || \boldmath{M}^2||}{1- || \boldmath{M}^2||}
}
\end{equation}

\noindent
{where $ \boldmath{M}$ is the mixing matrix of endowments and $ || \boldmath{M}^2 ||$ is the sum of all elements in the matrix $ \boldmath{M}^2 $. Assortativity ranges from -1 (all connections between dissimilar nodes) to 1 (all connections between similar nodes).
}

\noindent
{Under the assumption of homogeneous preferences the number of initial endowments of different type $e_1$ and $e_2$ affects the trading opportunities, so as a control variable we define an endowments similarity index equal to the number of the most scarce endowment type divided the number of the less scarce one. It ranges from 1 (same number of both types of endowment) to 1/6 (highest dissimilarity in the case of 7 agents). 
}

\noindent
{We consider two standard node centrality measures: node strength and betweenness centrality. Node strength is defined as the total weight of node's connections $s_i = \sum_j w_{ij}$ \citep{Barrat3747}, where $w_{ij} \in [0,1]$ is the weight of edge $ij$. Betweenness \citep{Newmanbetw} of node $i$ is defined as the number of shortest paths between pairs of nodes that pass through node $i$ and measures the importance of nodes in connecting different parts of the network.
In addition we construct two further indices: \emph{neighbourhood disassortativity} measures the fraction of neighbours of a node which start with a different endowment than the one of the node, and the \emph{scarcity index} is computed as the fraction of the number of endowments of the same type of node's endowment and the total number of endowments (nodes). The lower the value of the index, the less common the endowment of that agent is and so the more trade opportunity there are for that agent.}

The position in the network determines the trade opportunities each agent has, and as a consequence affects the distribution of the gains from trade in equilibrium. Under the assumption of the fair trade rule, at each instantaneous trade individuals equally split the gain in utility: the star is the most unequal network as the core takes half of the total gain in utility given initial allocations, while nodes in the peripheries only get $1/2(n-1)$ of the total gain in utility, where $n$ is the number of agents. 
On the other hand the most equal network is the complete network, where each node trades with each other an equal fraction of time, and where each agent gets $1/n$ of the total gain in utility.

\noindent
We can measure network inequality as the Gini coefficient of the strength (or weighted degree) distribution, using:

\begin{equation}
G_s = \frac{\sum_i^n (2i - n -1)s_i}{n\sum_i^n s_i}
\end{equation}

where $s_i$ is the strength, $n$ is the number of nodes and $i$ is the rank of the strength in ascending order.
The strength distribution of the most unequal network, the star over $n$ nodes, is such that the node in the core (call it $c$) has strength $s_c =1$ and the nodes in the periphery all have strengths  strictly less than 1 and such that $\sum s_{j \ne c} = 1$. 


\noindent
Note that higher positional inequality does not necessarily imply higher inequality in utilities after trade.
For example in a network with three agents, ranked according to their initial endowments, it could be that the inequality of the equilibrium distribution of utilities is minimised when we have the poorest agent in the core of a star: if the initial distribution of endowments is highly unequal, stars may promote redistribution, as we will show in the numerical example. 

%

\noindent
We measure inequality in final utility levels as the Gini coefficient of the individuals' utilities in equilibrium, $U^*_i$:

\begin{equation}
G_u = \frac{\sum_i^n (2i - n -1)U^*_i}{n\sum_i^n U^*_i}
\end{equation}

\noindent
We can use these inequality measures to investigate the relation between positional inequality, endowments inequality and redistribution of welfare, {keeping in mind that $G_u$ is not independent of the specific utility function chosen}. In principle, because of Theorem \ref{WT}, given initial endowments we can find the inequality level in equilibrium for each network configuration, hence a social planner interested in minimising inequality could either redistribute endowments or change the interaction network. Clearly the dependency between the network and inequality in equilibrium is not trivial, as we will show in the numerical exercise.

\begin{table}[h!]
    \centering
\sisetup{table-number-alignment = center, 
         table-space-text-pre ={(},
         table-space-text-post={\textsuperscript{***}},
         input-open-uncertainty={[},
         input-close-uncertainty={]},
         table-align-text-pre=false,
         table-align-text-post=false}
\begin{threeparttable}
    \caption{Effect of network metrics on utility gain}
    \label{tab1}
\begin{tabular}{r 
                S[table-format=-2.3] 
                S[table-format=-1.4] 
                S[table-format=-1.5] 
                 }
\toprule
                    & \multicolumn{3}{c}{Aggregate utility gain}                                \\
\midrule
Intercept        &   -0.5030***   & (0.006)     \\

\midrule
Clustering (dissimilar triangles only)        &  0.0231***    & (0.001)            \\
Assortativity   &        -0.0300***         &  (0.001)             \\
Number of nodes  & 0.1030*** & (0.001)\\
Connected & 0.0925\tnote{***} & (0.002)  \\
Endowments similarity  & 0.3635*** & (0.001)\\
\midrule
Observations        & {61367}            &                           \\
R-squared        & 0.818           &         \\
Joint significance (p-value F-statistics) & 0.00 &    \\
\bottomrule
\end{tabular}
    \smallskip
    \footnotesize
standard error in parentheses\par
\begin{tablenotes}[para,flushleft]
    \item[*]    $p < 0.10$,
    \item[**]   $p < 0.05$,
    \item[***]  $p < 0.01$
    \end{tablenotes}\par
\end{threeparttable}
\end{table}

\begin{table}[h!]
    \centering
\sisetup{table-number-alignment = center, 
         table-space-text-pre ={(},
         table-space-text-post={\textsuperscript{***}},
         input-open-uncertainty={[},
         input-close-uncertainty={]},
         table-align-text-pre=false,
         table-align-text-post=false}
\begin{threeparttable}
    \caption{Effect of network metrics on utility gain}
    \label{tab2}
\begin{tabular}{r 
                S[table-format=-2.3] 
                S[table-format=-1.4] 
                S[table-format=-1.5] 
                 }
\toprule
                    & \multicolumn{3}{c}{Aggregate utility gain}                                \\
\midrule
Intercept        &   -0.5038***   & (0.007)     \\

\midrule
Density      &   0.0055\tnote{***}            &  (0.001)  \\
Assortativity   &        -0.0309***         &  (0.001)             \\
Number of nodes  & 0.1028*** & (0.001)\\
Connected & 0.0932\tnote{***} & (0.002)  \\
Endowments similarity  & 0.3654*** & (0.001)\\
\midrule
Observations        & {61367}            &                           \\
R-squared        & 0.817           &         \\
Joint significance (p-value F-statistics) & 0.00 &    \\
\bottomrule
\end{tabular}
    \smallskip
    \footnotesize
standard error in parentheses\par
\begin{tablenotes}[para,flushleft]
    \item[*]    $p < 0.10$,
    \item[**]   $p < 0.05$,
    \item[***]  $p < 0.01$
    \end{tablenotes}\par
\end{threeparttable}
\end{table}

\noindent
{We investigate the effect of network metrics on the aggregate gain in utility for the network, that is the sum for all nodes of their gain in utility after trade (utility of equilibrium endowments -  utility of initial endowments ). Tables \ref{tab1} and  \ref{tab2} report the results of the OLS regressions. In both cases we control for the number of nodes, the dissimilarity in initial endowments and wether the network is connected or not. As density and clustering are highly correlated (Pearson's correlation coefficient 0.83) in order to avoid multicollinearity we drop one of the two alternatively. Comparing  tables \ref{tab1} and \ref{tab2} we can see that while both density and clustering have a significative positive impact on utility gain, clustering shows a stronger correlation than density. Both regressions show that a more disassortative network brings higher utility gains, even if the magnitude of the coefficient is less than we expected.}

\begin{figure}[h!]
\centering
\includegraphics[width = 0.5\textwidth]{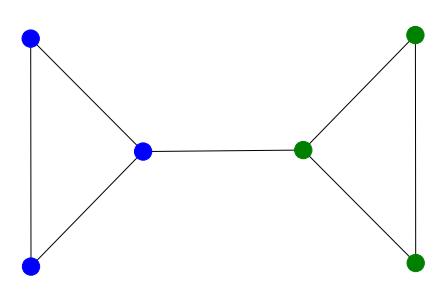}
\caption{Highly assortative networks generating high utility gain}
\label{fig:highass}
\end{figure}

\noindent
{A possible explanation is illustrated in the example in figure \ref{fig:highass}, showing a highly assortative network which generates a high aggregate utility gain: the two groups with different endowments manage to profitably trade thanks to the two agents bridging them, who are going to extract the highest utility gain from trade. So while assortativity has little impact on aggregate utility gain, it has a stronger effect on the distribution of this gain. To see this consider a simple exercise: we take all possible permutations of initial endowments of the network in figure \ref{fig:highass}, and we check the relationship between the Gini coefficient of utility after trading and the assortativity index, controlling for endowments similarity. Table \ref{tab3} shows that the more assortative the network the higher the inequality post-trade, and we can check that most of the variance in inequality is explained by assortativity. This is not necessarily true if the network is more densly connected and there are no nodes which have a clear advantage because of their position. See the appendix for more details.}

\begin{table}[h!]
    \centering
\sisetup{table-number-alignment = center, 
         table-space-text-pre ={(},
         table-space-text-post={\textsuperscript{***}},
         input-open-uncertainty={[},
         input-close-uncertainty={]},
         table-align-text-pre=false,
         table-align-text-post=false}
\begin{threeparttable}
    \caption{Effect of network assortativity on inequality}
    \label{tab3}
\begin{tabular}{r 
                S[table-format=-2.3] 
                S[table-format=-1.4] 
                S[table-format=-1.5] 
                 }
\toprule
                    & \multicolumn{3}{c}{Gini of post-trade utility}                                \\
\midrule
Intercept        &   0.0150***   & (0.001)     \\

\midrule
Assortativity   &        0.0185***         &  (0.001)             \\
Gini scarcity index  & 0.0283*** & (0.007)\\
\midrule
Observations        & {41}            &                           \\
R-squared        & 0.828           &         \\
Joint significance (p-value F-statistics) & 0.00 &    \\
\bottomrule
\end{tabular}
    \smallskip
    \footnotesize
standard error in parentheses\par
\begin{tablenotes}[para,flushleft]
    \item[*]    $p < 0.10$,
    \item[**]   $p < 0.05$,
    \item[***]  $p < 0.01$
    \end{tablenotes}\par
\end{threeparttable}
\end{table}

\noindent
The relation between the networks and inequality has been explored in \citep{Hidalgo14}, who find a relation between the network structure and meritocracy: when the network is sparse then individuals' compensations depend on the position in the network instead of their ability to produce value. In a different setting \citep{bowles11}  study the impact of networks on inequality where agent play a coalitional game. They find a connection between network sparseness and inequality by studying how the extremal Lorenz distribution changes under different networks.
{We investigate the impact of the network on the distribution of welfare at equilibrium estimating the dependence of inequality in post-trade utility on positional inequality, measured as the Gini coefficient of strength distribution. We include in the OLS regression assortativity and density and we control for the Gini coefficient of the scarcity index. Table \ref{tab4} shows that higher Gini index of strength is associated with higher inequality in utility post-trade, and the effect is significative and quite strong. More assortative networks also lead to a more unequal distribution of utility after trade, as we illustrated with the example of the 6-node network above, while on the contrary a more dense network leads to more equal final distribution. It is important to note that in our experiment there never is high inequality in terms of initial endowments: there are only two possible endowments which have the same value in utility term for all agents, and an agent can have an initial advantage only if they own a relatively scarce endowment. The magnitude of this is not large, as the maximum value of the Gini coefficient of the scarcity index is 0.15. This is fundamental to interpret the result of our experiment: when inequality in initial endowments is negligible, inequality in connection is the most important factor affecting distribution after trade. In the appendix we show what happens if we allow for large inequality in initial endowments.}

\begin{remark}
{
When there are low levels of inequality in inital endowments, a more equal allocation in equilibrium can be implemented by redistributing network strength from high strength agents to low strength ones.}
\end{remark}

\begin{table}[h!]
    \centering
\sisetup{table-number-alignment = center, 
         table-space-text-pre ={(},
         table-space-text-post={\textsuperscript{***}},
         input-open-uncertainty={[},
         input-close-uncertainty={]},
         table-align-text-pre=false,
         table-align-text-post=false}
\begin{threeparttable}
    \caption{Effect of network metrics on inequality}
    \label{tab4}
\begin{tabular}{r 
                S[table-format=-2.3] 
                S[table-format=-1.4] 
                S[table-format=-1.5] 
                 }
\toprule
                    & \multicolumn{3}{c}{Gini of post-trade utility}                                \\
\midrule
Intercept        &   0.099***   & (0.000)     \\

\midrule
Gini strength & 0.0254*** & (0.000) \\
Assortativity   &        0.0087***         &  (0.000)             \\
Density & -0.0084*** & (0.000) \\ 
Gini scarcity index  & 0.0458*** & (0.001)\\
\midrule
Observations        & {61367}            &                           \\
R-squared        & 0.615           &         \\
Joint significance (p-value F-statistics) & 0.00 &    \\
\bottomrule
\end{tabular}
    \smallskip
    \footnotesize
standard error in parentheses\par
\begin{tablenotes}[para,flushleft]
    \item[*]    $p < 0.10$,
    \item[**]   $p < 0.05$,
    \item[***]  $p < 0.01$
    \end{tablenotes}\par
\end{threeparttable}
\end{table}

\noindent
{To see how the position of an agent in the network affects her own utility, we regress the utility gain from trade on node strength and node betweenness centrality, controlling for how assortative the immediate neighbourhood of the node is and how scarce is the endowment of the agent. The results of the OLS regression are reported in table \ref{tab5}. Node strength and betweenness centrality  are both positively correlated with the utility gain in equilibrium; the higher the fraction of agent's neighbours, the larger the utility gain. Moreover agents endowed with larger quantities of the relatively scarce good in the economy are able to extract more utility from trade. All effects are significative.}

\begin{table}[htb]
    \centering
\sisetup{table-number-alignment = center, 
         table-space-text-pre ={(},
         table-space-text-post={\textsuperscript{***}},
         input-open-uncertainty={[},
         input-close-uncertainty={]},
         table-align-text-pre=false,
         table-align-text-post=false}
\begin{threeparttable}
    \caption{Effect of node metrics on post-trade utility}
    \label{tab5}
\begin{tabular}{r 
                S[table-format=-2.3] 
                S[table-format=-1.4] 
                S[table-format=-1.5] 
                 }
\toprule
                    & \multicolumn{3}{c}{Individual utility gain}                                \\
\midrule
Intercept        &   0.0505***   & (0.000)     \\

\midrule
Strength        &  0.1246***    & (0.000)            \\
Betweenness      &   0.0702\tnote{***}            &  (0.000)  \\
Neighbourhood disassortativity   &        0.0583***         &  (0.000)             \\
Scarcity index  & - 0.0875*** & (0.000)\\
\midrule
Observations        & {423643}            &                           \\
R-squared        & 0.832           &         \\
Joint significance (p-value F-statistics) & 0.00 &    \\
\bottomrule
\end{tabular}
    \smallskip
    \footnotesize
standard error in parentheses\par
\begin{tablenotes}[para,flushleft]
    \item[*]    $p < 0.10$,
    \item[**]   $p < 0.05$,
    \item[***]  $p < 0.01$
    \end{tablenotes}\par
\end{threeparttable}
\end{table}

\bigskip
\noindent
\cite{Axtell05} proves that  {for decentralized exchange processes  where groups of agents trade, provided that trade is individually rational so that the sum of utilities increases monotonically as long as there is trade, computational complexity is P (the number of interactions is bounded above by a polynomial of the number of agents and commodities). Moreover, analyzing the case of individually rational bilateral exchanges where couples of agents with Cobb-Douglas preferences are randomly matched to trade, \cite{Axtell05} finds that the number of interactions required to reach convergence to the equilibrium is linear in the number of agents, and increasing the number of commodities just increases the number of interactions needed without changing the linear dependency with the number of agents. Our model can be seen as an instance of this type of bilateral exchange, where instead of a random pairing of agents we have a probability distribution on the couples, represented by a weighted network with sum of weights equal to 1. On the basis of this result, if we consider our model of decentralized exchange with Cobb-Douglas utility, we expect a linear relationship between the number of agents and the number of interactions required for convergence. In order to illustrate this} we computed the convergence times for our process with 2 goods and homogeneous Cobb-Douglas utility functions, letting the number of agents vary from 3 to 100. For each experiment initial allocations of endowments were randomly chosen, and the network considered is a star network with a random agent in the core.
Each process is stopped at step $T$ if the difference between the amount of goods that each agent has at $T$ and $T-1$ is less than $\epsilon = 0.00001$.
As figure (\ref{fig:compl}) shows, the relation between the number of agents and the number of interactions needed for convergence appears to be linear.
{Notice that this linear relationship is independent of the network, as it holds for any probability distribution over couples of agents, so for any weighted network where the sum of weights is one. To conclude, based on \citep{Axtell05} we can affirm that the complexity of our exchange process is P, and that in the case of Cobb-Douglas preferences the number of interactions is linear in the number of agents. While do not compute convergence times for other specification of preferences, \cite{Axtell05} affirms that the linear relationship holds for CES utilities as well, and we would expect this to hold for our model as well.}

%

\begin{figure}[H]
\centering
\includegraphics[width = 0.8\textwidth]{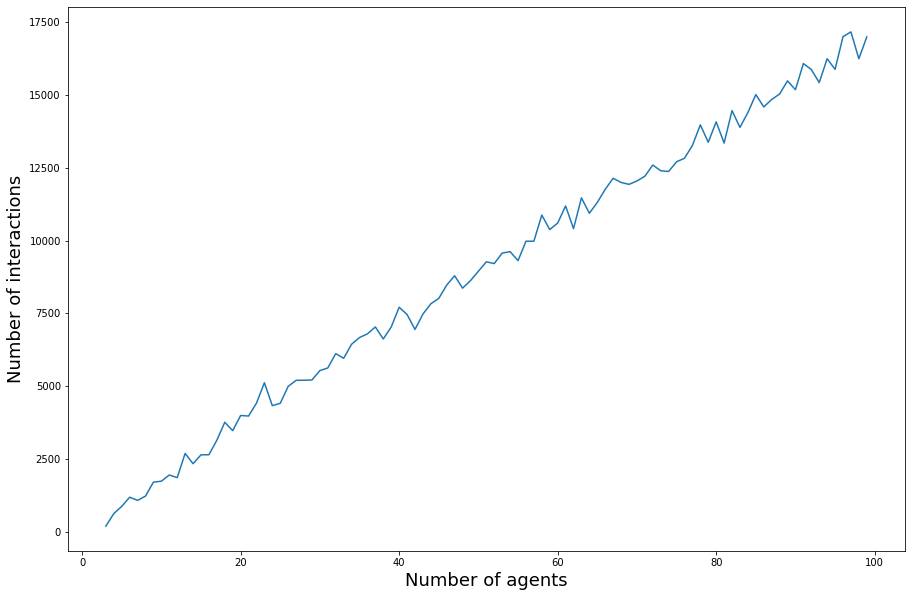}
\caption{Number of interactions required for convergence as a function of the number of agents, termination when $\epsilon = 0.00001$} 
\label{fig:compl}
\end{figure}

\section{Pairwise Stability}
{
In this section we provide some insights about the Pairwise Stability of trade networks.
Following  \cite{JacksonWolinsky96}, we define the \emph{value} of a trading network as the sum of individual utilities \emph{after trading}, $v(g) = \sum_i u_i (g) $. The \emph{allocation rule} is such that any agent gets the utility of her equilibrium allocation minus the cost of her links, so the payoff of agent $i$ trading on network $g$ is defined as:}
\begin{equation}
{
Y_i(g,v) =   u(x_i^*) - \sum_{j:ij \in g} c_{ij}
}
 \end{equation}
 {
where $x_i^*$ is the equilibrium allocation for agent $i$ after trading on $g$ and $c_{ij}$ is the cost of link $ij$.}
\begin{definition}
{
The network $g$ is pairwise stable if}

\begin{enumerate}
\item {for all $ij \in g$, $Y_i(g,v) \ge Y_i(g-ij,v)$ and $Y_j(g,v) \ge Y_j(g-ij,v)$}
\item {for all $ij \notin g$, if $Y_i(g,v) < Y_i(g+ij,v)$ then $Y_j(g,v) > Y_j(g+ij,v)$}
\end{enumerate}
\end{definition}

{
As our networks are weighted we need to make some assumptions about how we add and severe links. We construct our rule such that the sum of weights in the network is preserved: when an edge is added (severed) all remaining weights are decreased (increased) by the same proportion, such that the sum of all weights is one. As the network we analyse have equal weights on all edges, the new weights are simply $w_a = \frac{1}{|E|+1}$  and $w_d = \frac{1}{|E|-1}$ when we add and severe an edge respectively, where $|E|$ is the number of edges in the original network.}

{
As an example, consider a 4 nodes star, where all edges have the same weight, 1/3. After adding an edge between two peripheral nodes all edges will have weight 1/4. When we severe an edge from the star, the remaining two edges will have weight 1/2.}

{
As the trade opportunities and consequently the payoff of each player depend on the initial allocations as well as on the network, it is not possible to analyze the stability of trading networks separately from the initial allocations. To investigate pairwise stability of trading networks we adapt the trading example of \cite{JacksonWatts98}, and we restrict our attention to four types of networks:
}

\begin{enumerate}
\item  {star with equal weight $w_s=\frac{1}{n-1}$ on all edge.s}
\item  {multi-hub star, where the $k$ agents in the core are connected to everybody else and all edges have weight $w_k = \frac{1}{k(n-k)+\binom{k}{2}}$}
\item  {circle with equal weight $w_o = \frac{1}{n}$ on all edges.}
\item  {complete network with equal weight $w_c = \frac{1}{\binom{n}{2}}$ on all edges.}
\end{enumerate}

{
For the sake of simplicity we consider that there are two goods in the economy, and all agent have the same utility function, a symmetric Cobb-Douglas $U(x) = x_1^{0.5}x_2^{0.5}$. Moreover we assume that the cost of creating and maintaining a link is the same for all edges and equal to $c$. An agent's endowment is either (2,1) or (1,2), and we consider all possible combinations of endowments, provided that at least one of the endowments is different (otherwise there would be no trade given our assumptions on the utilities). Considering the total endowment of each good in the economy, that is the sum of that good for all agents, we define as \emph{scarce} the good which total endowment is less, if this good exists. Note that as long as the number of agents is odd, there always is a scarce good in the economy, while if the number of agents is even, the only case in which there is no scarce good is when half of the agent have endowment (2,1) and the other half (1,2). The results that follow are obtained computing the fair trade dynamics for each of the four types of networks listed above, for any combinations of initial endowments such that at least one is different. For each of these networks then we compute the fair trade dynamics of all networks obtained adding and severing links according to the rule described above. Finally, we compare the utility of the equilibrium allocation net of the cost of links of the initial network with that of each network obtained from it by adding or severing links, hence determining Pairwise Stability.}

{
Note that in all of the following results, if a network is pairwise stable for a positive link cost $c>0$ the upper (and lower) bound for $c$ for which stability holds will depend on the extra utility that agents gain after trading, which ultimately depends on the network itself, the initial allocations and agent preferences. We call the upper bound $\bar{c}_g^n$ and the lower bound $\ubar{c}_g^n$  where $g = \{s, k, o, c\}$ stands for star, multi-hub with $k$ nodes in the core, circle and complete network respectively and $n$ is the number of nodes. 
}

\begin{result}
 {
The star network is pairwise stable  for some $0 \le c \le \bar{c}_s^n$ only if there is one scarce good and only one agent owning more of the scarce good in her initial endowment, and that agent is in the core.
}
\end{result}

{
The interpretation of this result is straightforward: given there is only one agent possessing the scarce good, every other agent wants to trade with her and only with her. So there is no advantage in creating a link between any pair of nodes in the periphery and no agents in the periphery has any incentive in deleting the link with the core. If there are two (or more) agents owning the scarce good, then the star is not pairwise stable: if one of them is in the core, while agents in the periphery have no incentive to delete an existing link with the core, they all have incentive in linking with the other agent owning the scarce good. This agent is indifferent between deleting the link with the core or keeping it, but is strictly better off linking with all other agents.
} 

\begin{result}
 {
The multi-hub network with $k$ nodes in the core is pairwise stable for $c =0$ }
\begin{enumerate}
\item  {if there is one scarce good and there are at most $k$ agents who have more of the scarce good in their initial endowments and they are all in the core.}
\item  {if there is no scarce good, $n$ is even, $k=n/2$, all agents in the core have the same endowments and all agents in the periphery have the same endowments.}
\end{enumerate}
\end{result}

{
Once all agents who do not own the scarce good are connected with the agents owning the scarce good, they have no incentive to delete nor to add any other link. Agents owning the scarce good will be indifferent between severing the link between them or not (if the cost is zero), but will strictly prefer not to severe any link with any of the peripheral agents not owning the scarce good. Note that if there are $k$ agents who own the scarce good and they are all connected with agents not owning the scarce good but not between them, the network is pairwise stable for a non negative cost, $0 \le c \le \bar{c}_k^n$. This is because links between agents with the same endowment are worthless.
}

\begin{result}
 {The circle network}
\begin{enumerate}
\item  {Is pairwise stable for $n \ge 4$ and $n$ even,  $ \ubar{c}_o^n \le c \le  \bar{c}_o^n$,  there is no scarce good and any couple of neighbours has different initial endowments.}
\item {Is not pairwise stable for any $c\ge0$ for $n>4$ and $n$ odd.}
\end{enumerate}
\end{result}

{
Consider the circle with 4 agents}\footnote{In the case $n=3$ the circle network and the complete network are indistinguishable}{, no scarce goods and all couple of connected agents having different initial endowments. No agent has incentive to severe links as long as the gain they obtain from trading with their neighbours is more than the cost of these links. There are only two links that can be added in this case, and they are both between agents with the same initial endowments, so no profitable trade between them, hence no incentive to form them. Networks with $n>4$ and $n$ odd are never pairwise stable as in this case at least one link is between agents with the same endowment: they will have incentive to severe that link and connect to agents with different endowments. When $n>4$ and $n$ even, provided that any couple of neighbours have different endowments, the circle is pairwise stable only for a positive link cost $c$. This is because for each agent there are $\frac{n}{2}-2$ other agents with different endowments which are not in the agent's neighbourhood. This means that these are all profitable links which both agents will be incentivized to form, unless the cost of these links is higher than the extra gain they can get from forming them.
}

\begin{result} {
The complete network is pairwise stable for $c=0$ both if there is a scarce good and if there is no scarce good.}
\end{result}

{
First we can notice that the complete networks are stable only if $c=0$. This is because in a complete network there will be worthless links between agents which have the same initial endowments, so as long as the cost of linking is positive agents will be better off severing those links.  Connections with neighbours with different endowments are valuable, so there is no incentive in severing them.
}

\begin{result}  {
Any network $g$ where all and only agents with different endowments are connected is pairwise stable for some $0\le c \le \bar{c}_g^n$. If all connections are between agents with different endowments but not all of them are connected, then the network is pairwise stable for a strictly positive $\ubar{c}_g^n \le c \le \bar{c}_g^n$.}
\end{result}

\section{Conclusions}

This paper studies an Edgeworth process on weighted graphs, where agents can continuously exchange their endowments with their neighbours, driven by their utility functions. {Under the assumption of cardinal utilities,} we define a family of trade dynamics which fixed points coincide with the Pareto Set, and choose a specific mechanism in this family, according to which individuals equally split the utility gain of every trade. This choice is without loss of generality as the results obtained hold for all trade mechanisms that satisfy \emph{zero sum, trade} and \emph{positive gradient}. Under usual assumptions on the structure of preferences we prove a version of the Second Welfare Theorem on networks: for any weighted connected network, there exists a sequence of Pareto improving trades which ends in the Pareto Set. 
Assuming Cobb-Douglas preferences, we build numerical examples of the mapping between the network topology and the final allocation in the Pareto Set, and provide a brief analysis of the impact of the topology on the final allocation. We believe that the relationship between the network and inequality should be further analysed, to understand the link between deprivation in endowments and deprivation in opportunities determined by the position on the network.

\bibliographystyle{chicago}
\bibliography{references}

\appendix
\global\long\def\thesection{\Alph{section}}
 \global\long\def\thesubsection{\Alph{section}.\arabic{subsection}}
 \setcounter{proposition}{0} \global\long\def\theproposition{\Alph{proposition}}
 \setcounter{definition}{0} \global\long\def\thedefinition{\Alph{definition}}
 \setcounter{figure}{0} \global\long\def\thefigure{\Alph{figure}}
 \global\long\def\theexample{\Alph{example}}
 \setcounter{example}{0}

\section{More agents} \label{appendix_more}

{Here we build on the definition of \emph{fair trading} in Section \ref{sec_2fair}, to show that, if there are more agents, and if every agent can trade with anyone else, we need to increase the number of goods if we want to extend the definition.} 

Suppose now that there are more than two agents, so that $n \geq 3$.
Trade is always bilateral, and \emph{fair trading} implies that for every trade the marginal utility from trading has to be  equally split among the parts:
\bge \label{ort_all}
\left( {\mu}_{i,t} + {\mu}_{j,t} \right) \cdot f_i \left( {\mu}_{i,t} , {\mu}_{j,t} \right) = 0  \quad \forall i,j  \in N, i \ne j\ \
\ene 

\noindent
This must hold for all of the $n-1$ possible couples where trader $i$ is involved, so that individual $i$'s instantaneous trade ${f}_i$ lies in a sub--space of dimension $m-n+1$, if it exists.
This clearly imposes a first constraint on the minimal possible amount $m$ of goods.

\noindent
Moreover, by the zero sum property, we need that the sum of all the istantaneous trades cancels out,  $\sum {f}_i =0$.
This is an additional constraint, that will be satisfied only if the dimension of the sub--space where ${f}_i$ lies is more than one.
So the minimum number of goods that guarantees the existence of fair trading is such that $m-n+1 \geq 2$, or that $m \geq n+1$.

\begin{proposition}
If $n \ge 3$ then fair trading mechanism exists if and only if $m \ge n+1$
\end{proposition}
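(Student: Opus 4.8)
The proof I would give is essentially the dimension count already sketched before the statement, made airtight. Fix a configuration $\mathbf M=(\vec\mu_{1},\dots,\vec\mu_n)$ of strictly positive marginal-utility vectors and, for each agent $i$, let $V_i(\mathbf M)=\{v\in\R^m: v\cdot(\vec\mu_i+\vec\mu_j)=0\ \text{for all}\ j\neq i\}$ be the subspace into which the egalitarian conditions \eqref{ort_all} force $\vec f_i$. First I would record two facts. (a) $V_i$ is cut out by at most $n-1$ linear equations, so $\dim V_i\geq m-(n-1)$, with equality when the $n-1$ vectors $\vec\mu_i+\vec\mu_j$ are independent (the generic case). (b) Given the fair-split conditions together with zero sum, positive gradient is automatic and in fact an equality: from $(\vec\mu_i+\vec\mu_j)\cdot\vec f_j=0$ one gets $\vec\mu_i\cdot\vec f_i=-\sum_{j\neq i}\vec\mu_i\cdot\vec f_j=\sum_{j\neq i}\vec\mu_j\cdot\vec f_j$, so all the numbers $\vec\mu_i\cdot\vec f_i$ coincide, and summing over $i$ forces their common value to vanish since $n\geq3$. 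Hence the only axioms that are not for free once $\vec f_i\in V_i(\mathbf M)$ and $\sum_i\vec f_i=\vec 0$ are zero sum and Trade, and the content of the proposition is exactly when these two can be met simultaneously by a nonzero tuple.

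For necessity I would argue by contraposition: assume $m\leq n$. If $m\leq n-1$, then at a generic profile the $n-1$ vectors $\vec\mu_i+\vec\mu_j$ already span $\R^m$, so every $V_i=\{0\}$ and all $\vec f_i=\vec 0$, contradicting Trade (the $\vec\mu_i$ at a generic profile are pairwise linearly independent). If $m=n$, I would evaluate the mechanism at $\vec\mu_i=\vec e_i$: here $V_i$ is the line spanned by $v_i\propto 2\vec e_i-\mathbf 1$ (it must be orthogonal to $(\vec e_i+\vec e_j)-(\vec e_i+\vec e_k)=\vec e_j-\vec e_k$, hence constant on the coordinates $\neq i$, and orthogonality to $\vec e_i+\vec e_j$ fixes the ratio), and the matrix $2I-J$ with columns $2\vec e_i-\mathbf 1$ has $\det=2^{\,n-1}(2-n)\neq0$ for $n\geq3$. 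So $v_1,\dots,v_n$ are independent, $\sum_i c_iv_i=\vec 0$ forces every $c_i=0$, and again Trade fails. Therefore $m\geq n+1$. (Incidentally $\det(2I-J)=0$ at $n=2$, which is why the obstruction disappears there and the hypothesis $n\geq3$ is genuinely used.)

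For sufficiency, assume $m\geq n+1$, so $\dim V_i\geq m-n+1\geq2$ for every $\mathbf M$. The linear map $\Phi:\prod_{i=1}^n V_i\to\R^m$, $\Phi(\vec f_1,\dots,\vec f_n)=\sum_i\vec f_i$, has domain of dimension $\geq n(m-n+1)$, so $\dim\ker\Phi\geq n(m-n+1)-m=(n-1)(m-n)\geq n-1\geq2$; picking a nonzero element of $\ker\Phi$ (and $\vec f_i\equiv\vec 0$ on the contract curve $\W$, where all $\vec\mu_i$ are parallel) gives a candidate mechanism satisfying zero sum, fair split, and positive gradient. The step I expect to be the real obstacle is upgrading ``nonzero tuple'' to the full Trade axiom — that for \emph{every} pair $i,j$ with $\vec\mu_i,\vec\mu_j$ linearly independent the tuple can be chosen with $\vec f_i\neq\vec 0$ or $\vec f_j\neq\vec 0$. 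For this I would exploit the extra $\geq(n-1)$ dimensions of $\ker\Phi$ through the transversality of $V_i+V_j$ and $\sum_{k\neq i,j}V_k$ inside $\R^m$ (their dimensions sum to $n(m-n+1)>m$, so generically they meet in dimension $(n-1)(m-n)>0$); the delicate sub-case is the non-generic configurations where the $\vec\mu_i$ are not all parallel but the spans $V_i+V_j$ degenerate, where one must check that the feasible set in $\ker\Phi$ still contains a tuple active on the pair. Modulo this verification, the proposition follows.
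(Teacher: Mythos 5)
Your route is the same one the paper takes: the text immediately before and after the proposition is exactly this dimension count (each $\vec f_i$ confined, generically, to an $(m-n+1)$--dimensional subspace; zero sum solvable nontrivially only when $m\ge n+1$), illustrated by the three--trader example; the paper gives no more formal proof than that, so your write-up is best judged as an attempt to make that sketch airtight. The necessity half essentially succeeds, up to one blemish: marginal utilities are strictly positive in the model, so the profile $\vec\mu_i=\vec e_i$ is not admissible. Replace it by the paper's own profile $\vec\mu_i=\mathbf 1+\vec e_i$ (the same computation gives $V_i$ spanned by $(2n+2)\vec e_i-3\cdot\mathbf 1$ and $\det\left((2n+2)I-3J\right)=(2-n)(2n+2)^{n-1}\neq 0$ for $n\ge 3$), or perturb $\vec e_i$ slightly and invoke continuity of the determinant; such gradient profiles are easily realized by admissible utilities at some allocation.

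Two genuine gaps remain, both in the ``if'' direction. First, as you yourself concede, exhibiting one nonzero element of $\ker\Phi$ does not deliver the Trade axiom, which requires that \emph{every} pair with linearly independent gradients be active; the transversality remark is not a proof, and this is precisely the step that goes beyond counting equations and unknowns (the paper does not supply it either, nor does it address how to select such a tuple as a function of ${\bf M}_t$ so that a dynamical system is defined). Second, and more importantly, your observation (b) does not say what you claim. What you actually prove is that \eqref{ort_all}, imposed as in the paper's example on each agent's \emph{total} trade, together with zero sum forces $\vec\mu_i\cdot\vec f_i=0$ for every $i$ once $n\ge3$. The positive-gradient axiom demands a \emph{strictly} positive sign whenever there is trade, so any tuple satisfying Trade violates it: the candidate mechanism produced by your kernel count is not a trading mechanism in the paper's axiomatic sense, the potential $\bar U$ is constant along it, and the convergence argument of the first section does not apply to it. So either the proposition is read as asserting only the existence of nonzero trades satisfying \eqref{ort_all} and zero sum --- which is what your count (and the paper's sketch) establishes --- or, taking the axioms literally, your own computation shows the ``if'' direction fails for every $m$. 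Calling positive gradient ``automatic'' hides this conflict; the honest statement is that splitting gains equally with all $n-1$ partners simultaneously is possible only because no one gains at all, a tension the paper leaves unaddressed and which is resolved only by the pairwise (network) formulation of the following section.
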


\begin{example}
[3 traders]
Suppose that for a certain allocation all the three vectors of marginal utilities of the traders are linearly independent.
Say ${\mu}_1 = (2,1,1)$, ${\mu}_2 = (1,2,1)$ and ${\mu}_3 = (1,1,2)$.
${f}_1$ has to be orthogonal to both ${\mu}_1 + {\mu}_2 = (3,3,2)$ and ${\mu}_1 + {\mu}_3 = (3,2,3)$, so that it will be of the form ${f}_1 = k(5,-3,-3)$, for some $k \in \R$.
Similarly we will have ${f}_2 = h(-3,5,-3)$, for some $h \in \R$, and ${f}_3 = \ell (-3,-3,5)$, for some $\ell \in \R$. \\
To balance trading we need also that ${f}_1 + {f}_2 + {f}_3 = (0,0,0)$, but as they are linearly independent vectors, this is possible only for $k=h=\ell=0$, which means no trading, even if marginal utilities are not proportional.
\hfill $\square$
\end{example}

\begin{remark}
If the fair trading is between two traders $(n=2)$ then two goods $(m\ge 2 )$ are sufficient to guarantee the existence of trade
\end{remark}

\noindent
The above can be easily verified, with two traders each trade ${f}_1$ and ${f}_2$ by construction is  orthogonal to the same vector ${f}_1+{f}_2$, so that they will never be linearly independent.

\noindent
The previous example shows that if $m \leq n$, and $m \geq 3$, then fair trading is not possible.
If the number of goods where instead $m = n+1$, then every candidate ${f}_i$ would lie on a plane, and there would always exist a non--trivial solution for the zero sum property because we would have a homogeneous system of linear equations with $n$ linear equations in $n+1$ variables.
If $m$ is even greater, then existence would result \emph{a fortiori}.

\section{Experiment with large inequality in intial endowments}

\noindent
If we allow inequality in initial endowments to be larger we expect that when agents who are disadvantaged in endowments are also disadvantaged in terms of network connections, inequality in network strength will increase inequality in utilities, and viceversa.
To verify this hypothesis we consider a different exercise, allowing much larger variation in endowments and edge weights. To limit the computational burden, instead of considering all simple non isomorphic graphs we restrict our attention to a specific family of weighted connected graphs that can be seen as a linear combination of stars, where the minimally connected network is a star and the maximally connected network is a complete one. This class of networks is a weighted analogous to \emph{nested split graphs} \citep{konig2014nestedness}, that are graphs with a nested neighbourhood structure, where the set of neighbours of lower degree nodes is contained in the set of neighbours of higher degree ones.
Except for the limiting case of the complete graph, the nodes in our networks can be divided in two partitions according to their degree: nodes in the core are connected between each other and with all the nodes in the periphery, while nodes in the periphery are connected only to nodes in the core, giving a \emph{multi-hub} network.
We generate networks in this class with 3, 5 and 7 nodes and we let initial endowments of the two goods in the economy vary such that the total quantity of each good is constant across all experiments, $e_1 = \sum_n x_{i,1} = 30$ and $e_2 = \sum_n x_{i,2} = 18$ respectively. All agents have the same preferences over the two goods, represented by a Cobb-Douglas function $U(x_1,x_2) = x_1^{0.5}x_2^{0.5}$.
To initialise each experiment with $n$ agents, we generate a set of different initial endowments such that the sum of good 1 is 30 and the sum of good 2 is 18 and we compute the limit points of the trading dynamics for each network and endowments, generating 107484 experiments with 3 agents, 76433 experiments with 5 agents and 10762 with 7.
We then split the obtained dataset according to the following rule: we rank agents in terms of initial endowments and of network strength: the agent with the largest endowments gets the highest endowments ranking and the agent with the largest strength gets the highest strength ranking. The ranking for initial endowments is found by evaluating utility at initial conditions for each agent\footnote{Alternative ranking measures have been evaluated, namely the sum of initial endowments and the Euclidean norm of the vector of initial endowments, and they do not change our results.}.
Then we put in one group, (\emph{different rank}) all those experiments for which the most disadvantaged agent in terms of endowments is advantaged in terms of connections. This means all the experiments where the poorest agent is at most second in the network strength ranking with 3 agents, first with 5 and 7 agents. Conversely we collect in the other group (\emph{similar rank}) the remaining experiments.
Results are summarized in Table  \ref{tabaa}, showing a positive and significant relationship between inequality in endowments and inequality in equilibrium utility in the \emph{same rank} group and a negative significant relationship in the \emph{different rank} both with homogeneous and non homogeneous preferences. On the basis of this result we can say:

\begin{table}[htb]
    \centering
\sisetup{table-number-alignment = center, 
         table-space-text-pre ={(},
         table-space-text-post={\textsuperscript{***}},
         input-open-uncertainty={[},
         input-close-uncertainty={]},
         table-align-text-pre=false,
         table-align-text-post=false}
\begin{threeparttable}
    \caption{Inequality dependence on endowments and strength}
    \label{tabaa}
\begin{tabular}{r 
                S[table-format=-2.3] 
                S[table-format=-1.4] 
                S[table-format=-1.5] 
                S[table-format=-1.5]
                S[table-format=-1.5]
                 }
\toprule
               
       OLS             &   {Same Rank}       &   {Different Rank}          \\
\midrule
Intercept        &   0.0186***       &    0.0163***            \\
                    & (0.000)           & (0.000)                     \\

\midrule
Gini Endowments        &  0.9521***       &    0.9604***         \\
                    & (0.001)           & (0.001)           \\
\addlinespace
Gini strength      &   0.0510***           &   -0.0072***          \\
                    &  (0.002)           &  (0.003)              \\
\midrule
Observations        & {99974}            &   {72397}             \\
R-squared        & 0.821          &   0.845          \\
Joint significance& 0.00 &  0.00    \\
\bottomrule
\end{tabular}
    \smallskip
    \footnotesize
standard error in parentheses\par
\begin{tablenotes}[para,flushleft]
    \item[*]    $p < 0.10$,
    \item[**]   $p < 0.05$,
    \item[***]  $p < 0.01$
    \end{tablenotes}\par
\end{threeparttable}
\end{table}

\begin{remark}
A policy maker interested in implementing a more equal allocation in equilibrium may decide to redistribute in two ways: either redistributing endowments from agents who have high network strength to those who have low network strength, or redistributing network strength (changing the network) from agents with high initial endowments to agents with low initial endowments.
\end{remark}

\section{Effect of higher order structures}

\noindent
{Higher order structures often capture important properties of the network and of the dynamical process \cite{Salnikov2019}. We already took into account the role of triangles elsewhere in this paper, here we focus on two types higher order structures on 4 nodes: tetrahedra  where only one node has endowment $e_1$  ($e_2$) while the remaining nodes have $e_2$ ($e_1$) and tetrahedra where two nodes have  endowment $e_1$  ($e_2$) and the other two have $e_2$ ($e_1$). For each we compute an index analogous to the clustering coefficient, which gives the fraction of higher order structures in the network over the possible number of high order structures of that dimension. Given that the two measures are correlated (Pearson's 0.59), we include each of them separately in the regression}

\begin{table}[htb]
    \centering
\sisetup{table-number-alignment = center, 
         table-space-text-pre ={(},
         table-space-text-post={\textsuperscript{***}},
         input-open-uncertainty={[},
         input-close-uncertainty={]},
         table-align-text-pre=false,
         table-align-text-post=false}
\begin{threeparttable}
    \caption{Effect of higher order structures on post-trade utility}
    \label{tabhon1}
\begin{tabular}{r 
                S[table-format=-2.3] 
                S[table-format=-1.4] 
                S[table-format=-1.5] 
                 }
\toprule
                    & \multicolumn{3}{c}{Total utility gain}                                \\
\midrule
Intercept        &   -0.5019***   & (0.006)     \\

\midrule
Tetrahedra (1 different endowment)    &   0.0162\tnote{***}            &  (0.004)  \\
Assortativity   &        -0.0308***         &  (0.001)             \\
Connected   &        0.0941***         &  (0.002)             \\
Endowments similarity    &        0.3656***         &  (0.001)             \\
Number of nodes    &        0.1028***         &  (0.001)             \\
\midrule
Observations        & {61367}            &                           \\
R-squared        & 0.817           &         \\
Joint significance (p-value F-statistics) & 0.00 &    \\
\bottomrule
\end{tabular}
    \smallskip
    \footnotesize
standard error in parentheses\par
\begin{tablenotes}[para,flushleft]
    \item[*]    $p < 0.10$,
    \item[**]   $p < 0.05$,
    \item[***]  $p < 0.01$
    \end{tablenotes}\par
\end{threeparttable}
\end{table}

\begin{table}[htb]
    \centering
\sisetup{table-number-alignment = center, 
         table-space-text-pre ={(},
         table-space-text-post={\textsuperscript{***}},
         input-open-uncertainty={[},
         input-close-uncertainty={]},
         table-align-text-pre=false,
         table-align-text-post=false}
\begin{threeparttable}
    \caption{Effect of higher order structures on post-trade utility}
    \label{tabhon2}
\begin{tabular}{r 
                S[table-format=-2.3] 
                S[table-format=-1.4] 
                S[table-format=-1.5] 
                 }
\toprule
                    & \multicolumn{3}{c}{Total utility gain}                                \\
\midrule
Intercept        &   -0.5017***   & (0.006)     \\

\midrule
Tetrahedra (2 different endowments)    &   0.0255\tnote{***}            &  (0.006)  \\
Assortativity   &        -0.0308***         &  (0.001)             \\
Connected   &        0.0941***         &  (0.002)             \\
Endowments similarity    &        0.3648***         &  (0.001)             \\
Number of nodes    &        0.1028***         &  (0.001)             \\
\midrule
Observations        & {61367}            &                           \\
R-squared        & 0.817           &         \\
Joint significance (p-value F-statistics) & 0.00 &    \\
\bottomrule
\end{tabular}
    \smallskip
    \footnotesize
standard error in parentheses\par
\begin{tablenotes}[para,flushleft]
    \item[*]    $p < 0.10$,
    \item[**]   $p < 0.05$,
    \item[***]  $p < 0.01$
    \end{tablenotes}\par
\end{threeparttable}
\end{table}

\noindent
{The results reported in tables \ref{tabhon1},\ref{tabhon2} show that both types of higher order structure are positively and significantly correlated with total utility gain, and that there is effectively no difference if we include one or the other in the regression. Similarly with the clustering coefficient for dissimilar triangles we can check that both higher order structures decrease inequality after trade, but even in this case density explains more of the variance than both higher order indices.}

\section{Details of the simulations} \label{append}

\noindent
{In section \ref{simulations} we consider as starting point to generate the weighted networks in our experiments the set of simple non isomorphic graphs on $n \in [3,7]$ nodes, and then we compute all permutations of endowments for each graph in this set. We did this operation for the sake of computational speed, but one shortcoming of this approach is that some of these permutations are redundant as they are equivalent in terms of initial configuration of the trade dynamics. To clarify this, recall that nodes in a graph can be grouped into orbits with respect to the graph automorphisms. Orbits identify the "role" of nodes in the graph: for example in a star, there are 2 roles, the hub and the periphery. Figure 5 shows two connected graphs on 4 nodes, one showing 3 node orbits the other 2. Node orbits are important because help us identify those networks which are equivalent in terms of trade dynamics. Take as example the leftmost graph in figure \ref{isom}, and consider the case in which 1 agent has endowment $(1,2)$ and 3 agents have endowment $(2,1)$. There are only three different trade configurations: one in which either of the two green nodes has endowment (1,2), one in which the red node has (1,2) and finally one in which the blue node has endowment (1,2), as can be seen in figure \ref{roles}. 
}

\begin{figure}[h!]
  \centering
  \begin{minipage}[b]{0.4\textwidth}
    \includegraphics[width=\textwidth]{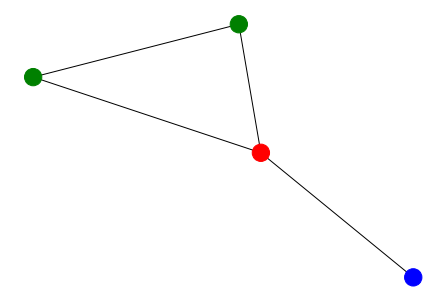}
  \end{minipage}
  \begin{minipage}[b]{0.4\textwidth}
    \includegraphics[width=\textwidth]{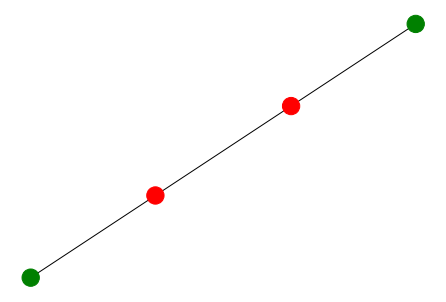}
  \end{minipage}
   \caption{Two examples of connected non-isomorphic graphs on 4 nodes where nodes with the same orbit have the same colour}
   \label{isom}
\end{figure}

\begin{figure}[h!]
  \centering
  \begin{minipage}[b]{0.3\textwidth}
    \includegraphics[width=\textwidth]{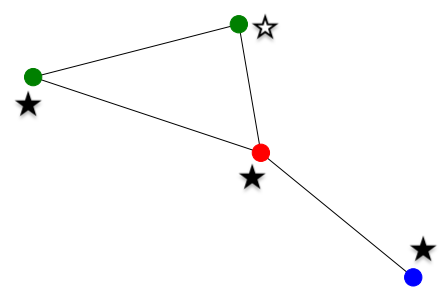}
  \end{minipage}
  \begin{minipage}[b]{0.3\textwidth}
    \includegraphics[width=\textwidth]{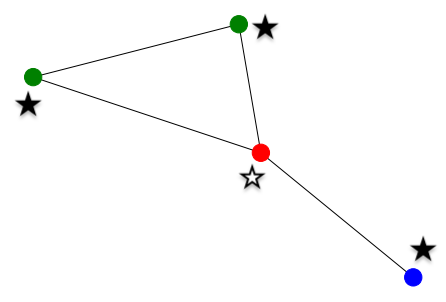}
  \end{minipage}  
  \begin{minipage}[b]{0.3\textwidth}
    \includegraphics[width=\textwidth]{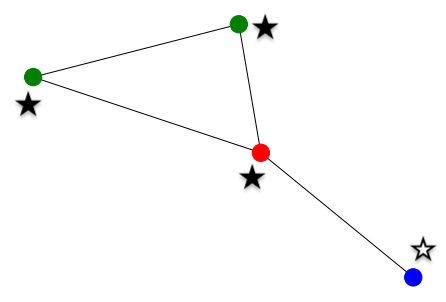}
  \end{minipage}
   \caption{The three different trade configurations for the graph with 3 different node roles when there is one different endowment. Endowment types are represented by filled or empty stars. Note that in the leftmost graph, permuting the endowment between the two green nodes makes no difference for the trade dynamics.}
   \label{roles}
\end{figure}

\newpage
\section{Further numerical examples} \label{append4}
\noindent
In this section we discuss some additional numerical examples on networks with 3 nodes.
We have three agents with Cobb-Douglas utility function  with constant return to scale: $U_i(x)=x_1^{\alpha_i}x_2^{1-\alpha_i}$. 
Call $\alpha_i$ the exponent of the utility function for agent $i$, and $x_i(0)=(x_{i,1}, x_{i,2})$ the initial allocation for agent $i$.
The network is represented by a unitary 2-simplex, where the barycentric coordinates of a point represent the weight of each edge.

Let us start from the example in Figure \ref{fig:simp}. For the same case, figure \ref{fig:ut_sim} shows the projections of the set of equilibria on the planes of the utility of two agents respectively, and makes the homeomorphism more evident.
\begin{figure}[h] 
\centering
\includegraphics*[width=8cm]{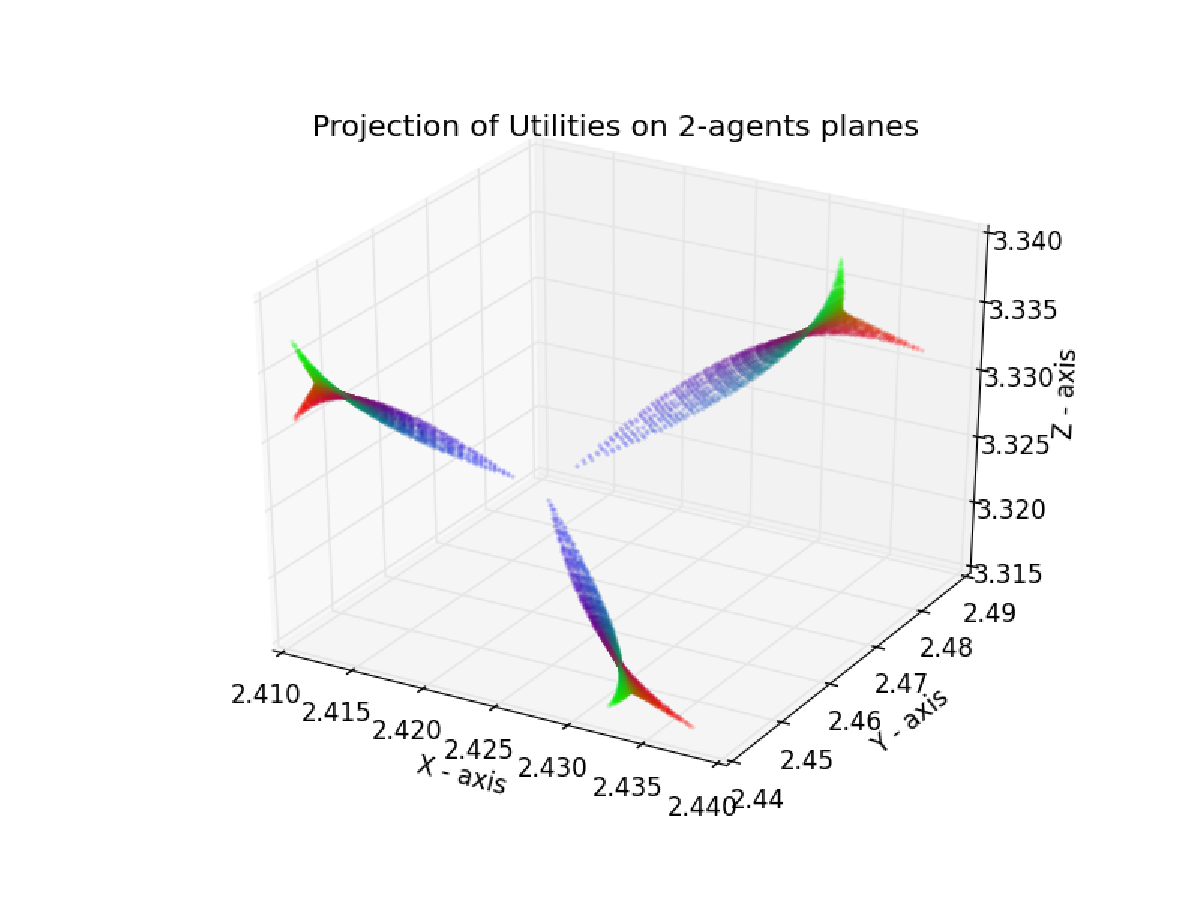}
\caption{Projection of equilibria in the space of utility on agents' planes }
\label{fig:ut_sim}
\end{figure}
Agent 3 has the highest initial endowment, and ends up having the highest level of utility in all the possible cases, ranging from 3.315 in its minimum, when the network is a star in which agent 2 is the core (blue vertex), to 3.330 in its maximum (when agent 3 is the core of the star). From this we can infer that the trade with agent 1 is the most advantageous for agent 3, as well as for agent 2, as also her utility hits the minimum point when she cannot trade with 3, and then increases when they trade on networks in which most of the interactions are between 2 and 3 (there is higher weight on this edge, as represented in the blue area). Clearly there is an asymptote in the growth of agent 1 utility moving towards a star in which agent 3 is the core (green area) and viceversa for agent 3 moving towards a star for which 1 is the core (red area). Looking at Figure \ref{fig:ut_sim}, utility of agent 1 is represented on the $x$ axis, and utility of agent 3 on $z$ axis: the figure has a twist in correspondence of the green area, where the utility of 1 stabilises around 2.430 and utility of 3 steeply increases till its maximum, while in correspondence of the red area utility of 1 stabilises around 3.330 while utility of 1 reaches its maximum.

\noindent
In Figure \ref{fig:eq_5_ut} we start from a different point in the space of goods, keeping the same utility functions. The initial allocations are such that $x_{1,1}>x_{2,1}>x_{3,1}$ and $x_{2,2}>x_{1,2}>x_{3,2}$ that is agent 1 and 2 have a lot of both goods and agent 3 is the poorest in both goods. As before each agent maximises her utility gain when she is the core of a star. Agent 3 is the one who is worse off by being a peripheral node when agent 1 or agent 2 are the core. This is not surprisingly as she is the one with the worst initial allocation. Viceversa utilities for agents 2 and 3 hit their minimum when agent 3 is in the core. By going towards the points in which the frequency of trades is mainly between agents 1 and 2 (the networks represented by the edge between the red and blue vertices in Figure \ref{fig:simplex}) their utility is close to the maximum, meaning that both rich agents would prefer trading among themselves because they can extract more utility, instead of trading with the \emph{poor} agent only.

\noindent
In Figure \ref{fig:eq_5_gsp} it is possible to observe the shape of the equilibrium points in the space of commodity one and commodity two respectively, holding the other commodity constant. As we would expect this is also a curved simplex, with each agent getting the highest quantity of each commodity (the vertices of the curved simplex) when they are the core of a star network.

\begin{figure}[h] 
\centering
\includegraphics*[width=8cm]{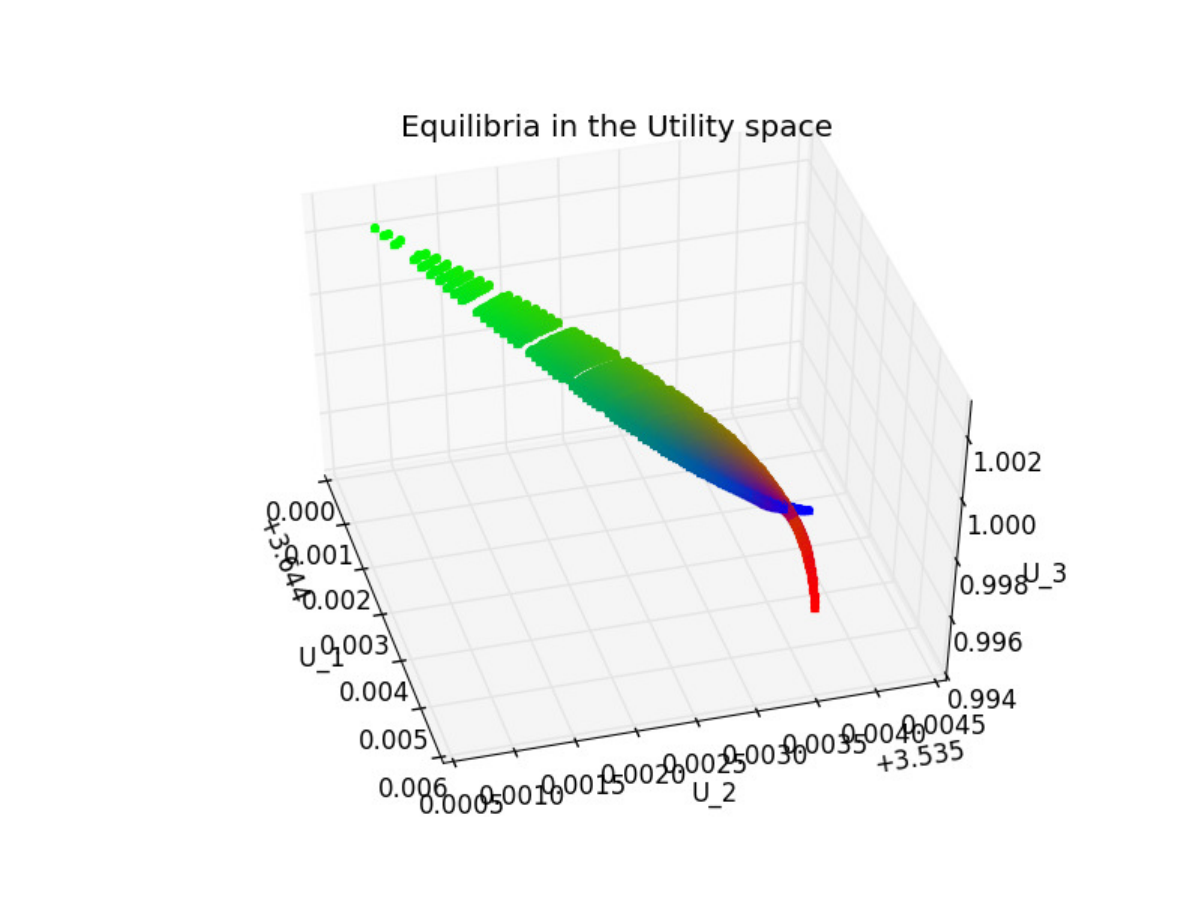}
\includegraphics*[width=8cm]{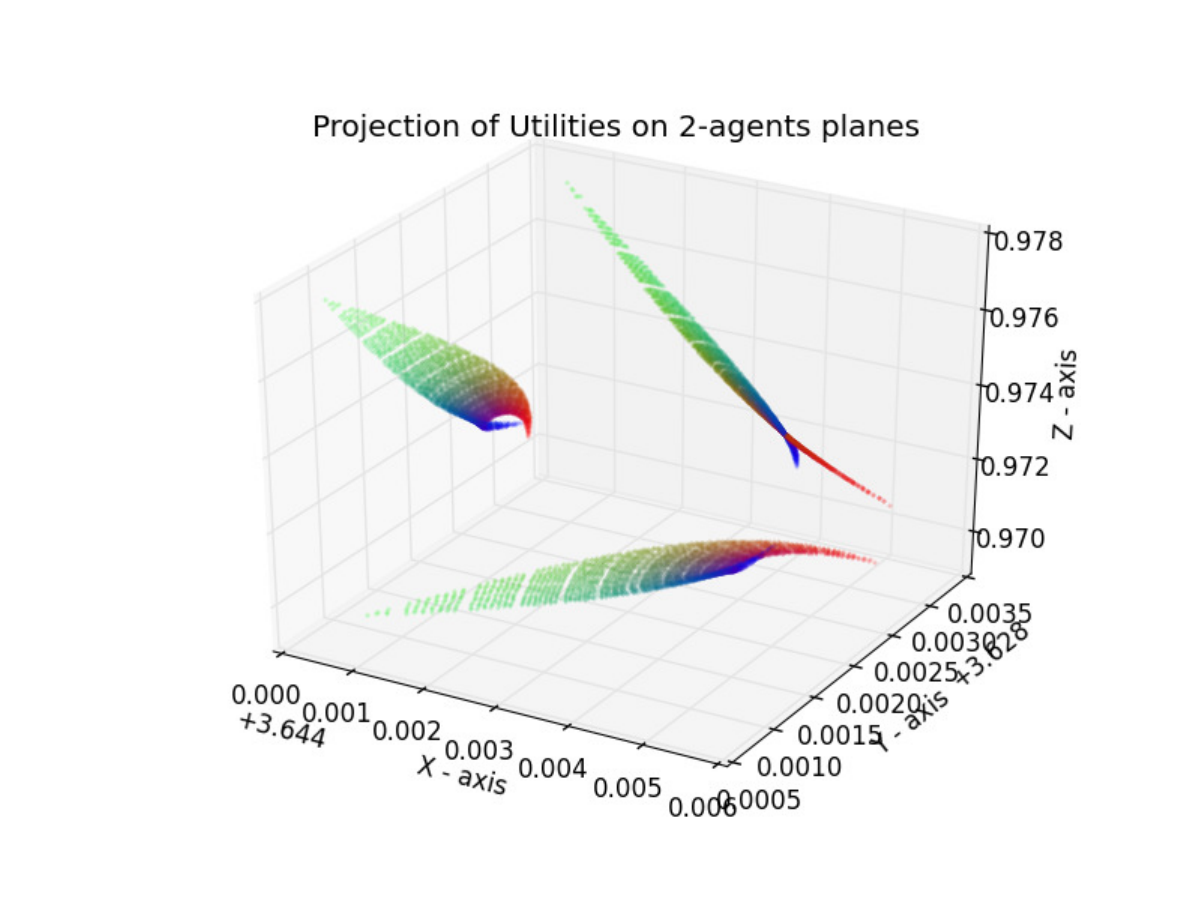}
\caption{Equilibria of the fair trading represented on the space of utilities for the case $\alpha_1=\alpha_2=\alpha_3=0.5$ (left) and projection on two-agents' planes.}
\label{fig:eq_5_ut}
\end{figure}

\begin{figure}[h] 
\centering
\includegraphics*[width=8cm]{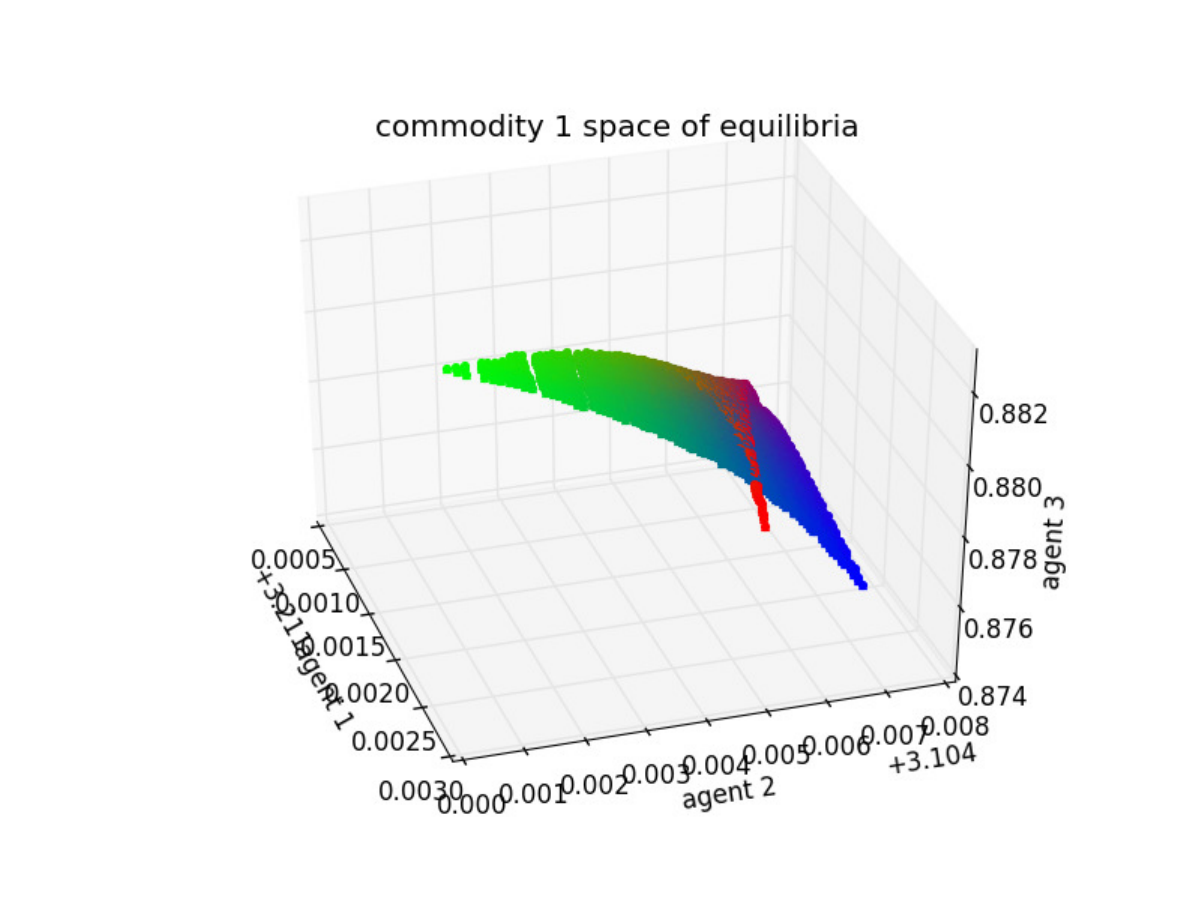}
\includegraphics*[width=8cm]{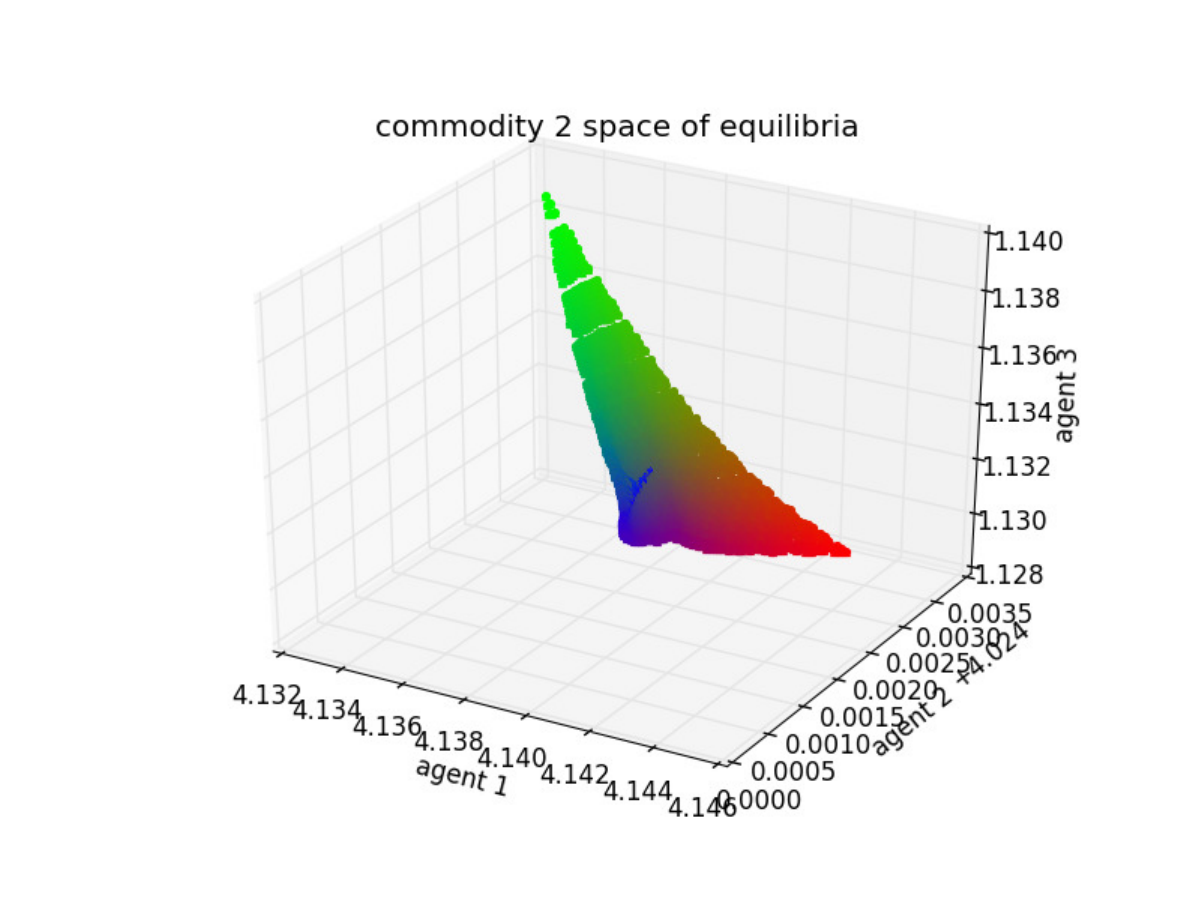}
\caption{Set of equilibria of a fair trading on the space of one commodity only}
\label{fig:eq_5_gsp}
\end{figure}

\begin{figure}[h] 
\centering
\includegraphics*[width=8cm]{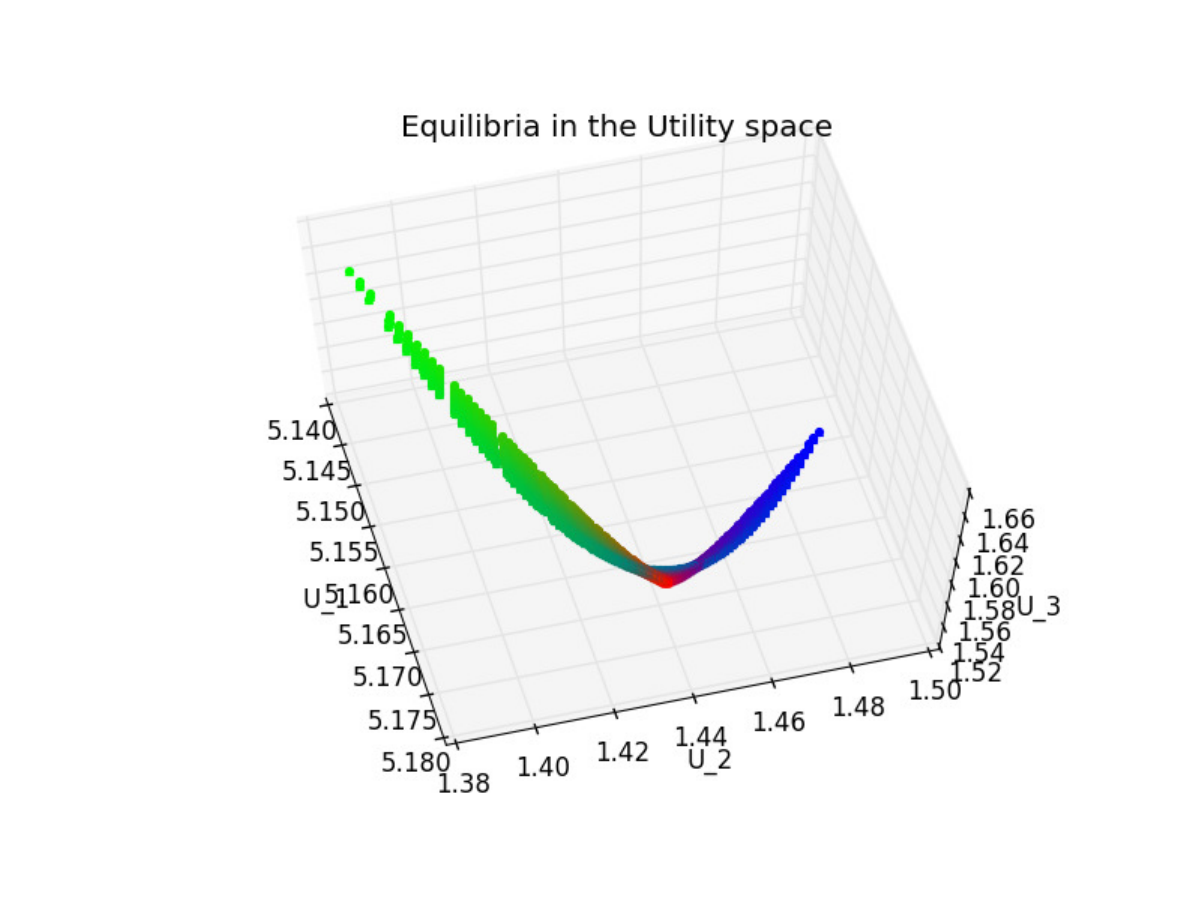}
\includegraphics*[width=8cm]{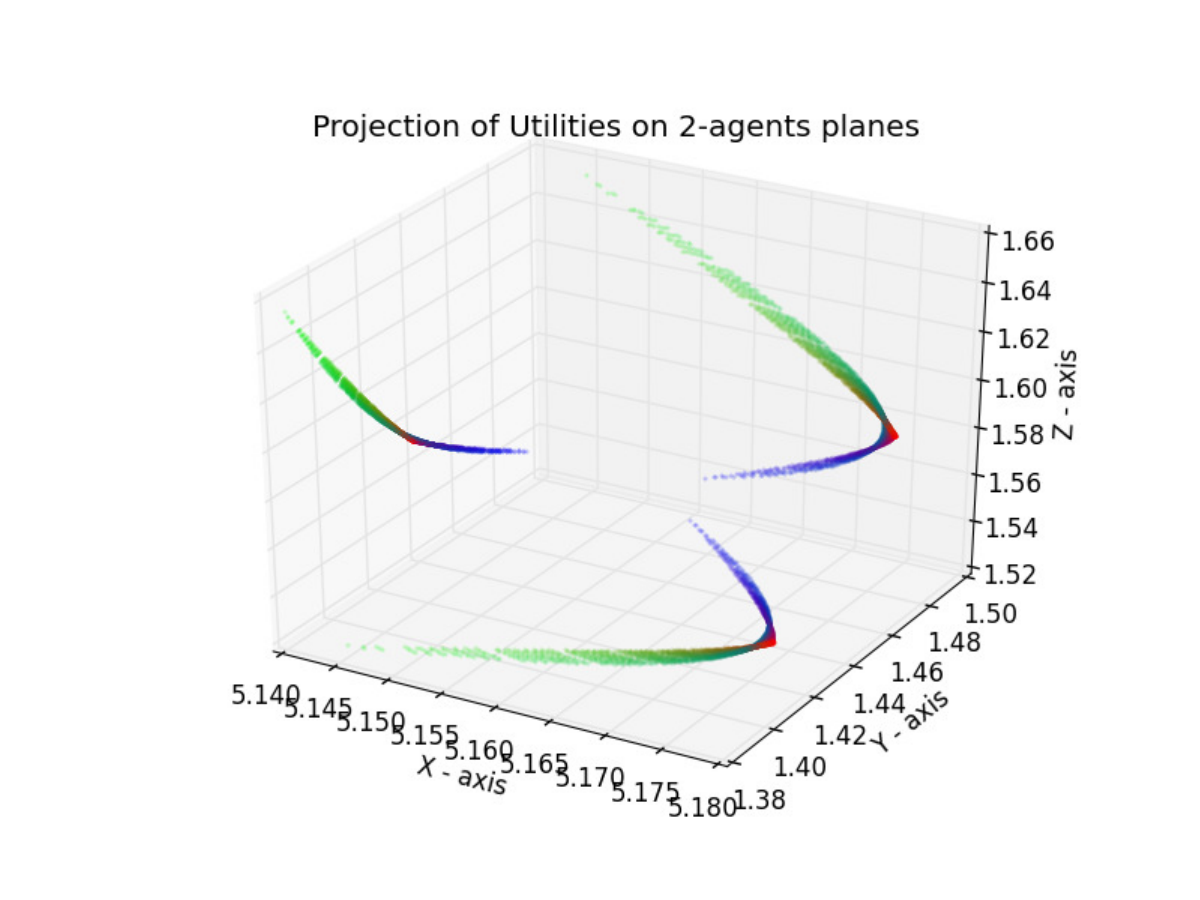}
\caption{ $\alpha_1=\alpha_2=\alpha_3=0.5$, extreme inequality: agent 1 is rich agents 2,3 are poor.}
\label{fig:eq_6_ut}
\end{figure}

\begin{figure}[h] 
\centering
\includegraphics*[width=8cm]{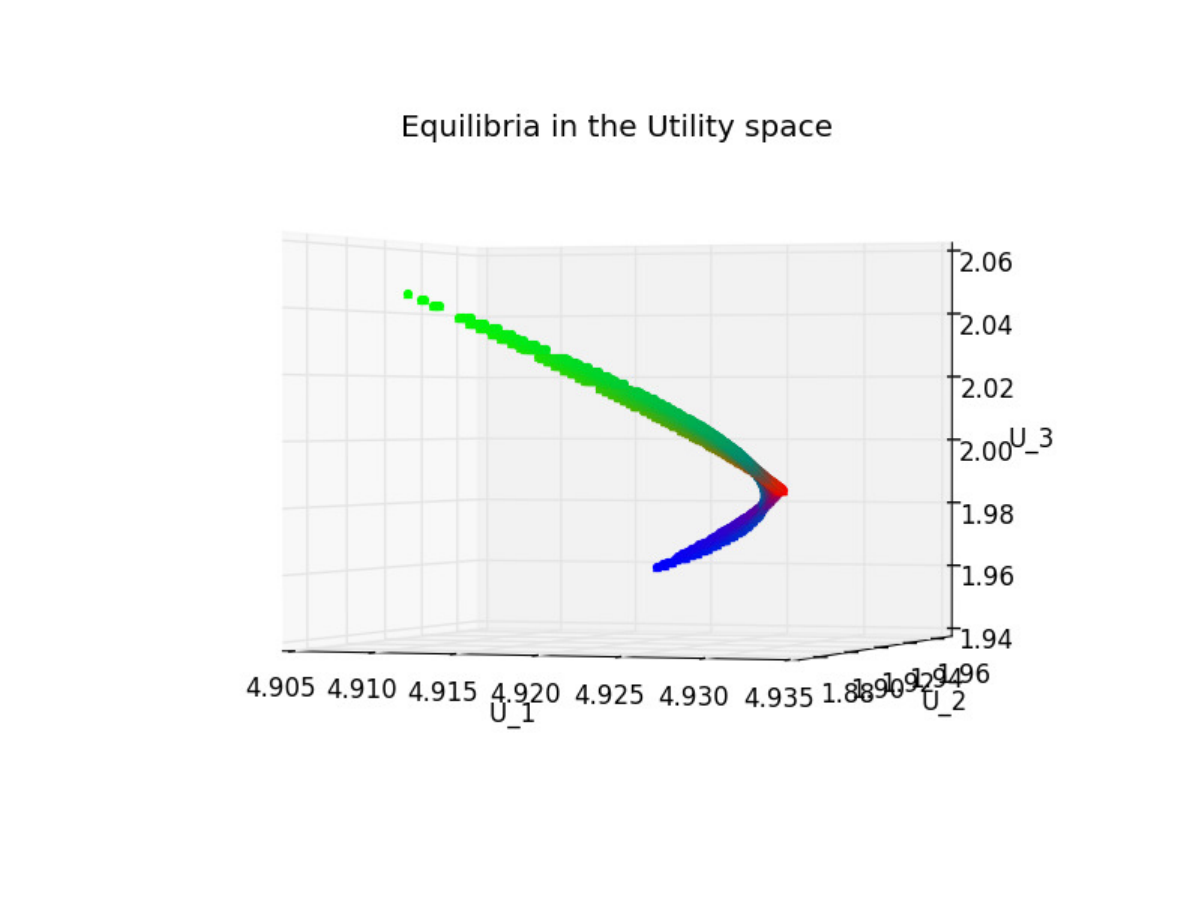}
\includegraphics*[width=8cm]{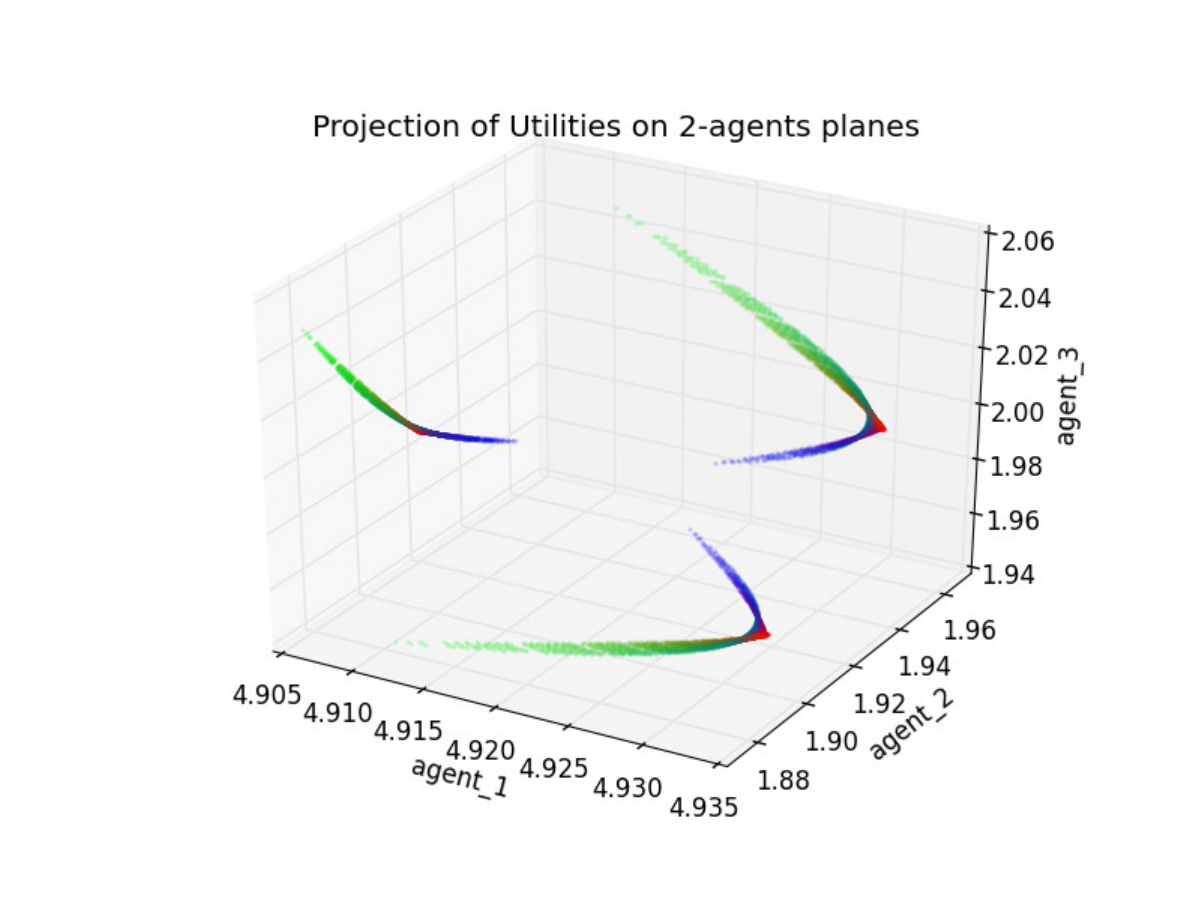}
\caption{$\alpha_1=\alpha_2=\alpha_3=0.2$, extreme inequality: agent 1 is rich agents 2,3 are poor}
\label{fig:eq_7_ut}
\end{figure}

\noindent
We then consider the case of extreme inequality in which agent 1 starts with a lot of both goods and agents 2 and 3 have a much inferior initial allocation, more precisely $x_{1,1}>x_{3,1}>x_{2,1}$ and $x_{1,2}>x_{2,2}=x_{3,2}$, results are represented in Figure \ref{fig:eq_6_ut} for the case of a Cobb-Douglas with $\alpha_1=\alpha_2=\alpha_3=0.5$, and in Figure \ref{fig:eq_7_ut} for the case in which they all prefer good 2 than good 1, that is their utility functions are such that $\alpha_i=0.2$ for $i={1,2,3}$. The first thing that we can notice is that the space of equilibria looks relatively similar in both cases, so that the initial allocation matters much more than the preferences, provided preferences are homogeneous across agents. Given the disproportion in initial allocations utility of agent 1 is greater than the two ``poor'' agents for all possible networks, while agent 2 and 3 maximize their utility when they are the core of a weighted star, as expected. Nonetheless note that both agent 2 and 3 will prefer to be in the periphery of the star where agent 1 is the core than being in the periphery of the star where any of the other "poor'' agent is in the core, even if the richest agent is maximizing her utility in this case. This is because both agents 2 and 3 prefer to have a consistent number of trades with agent 1, that is they will always prefer to trade in networks in which the weight of the edge connecting them with agent 1 is higher \emph{ceteris paribus}, and this determines the "boomerang'' shape of the set of equlibria.

\begin{figure}[h] 
\centering
\includegraphics*[width=8cm]{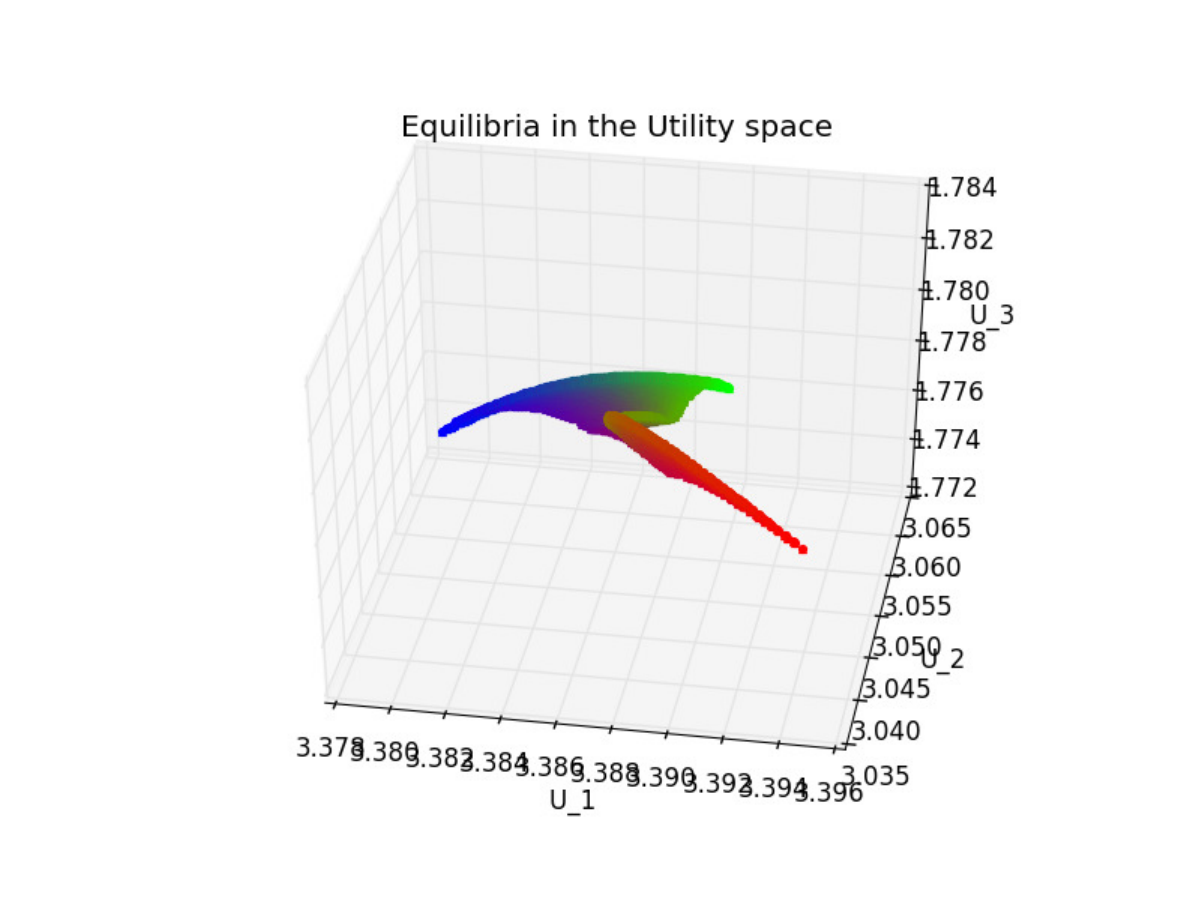}
\includegraphics*[width=8cm]{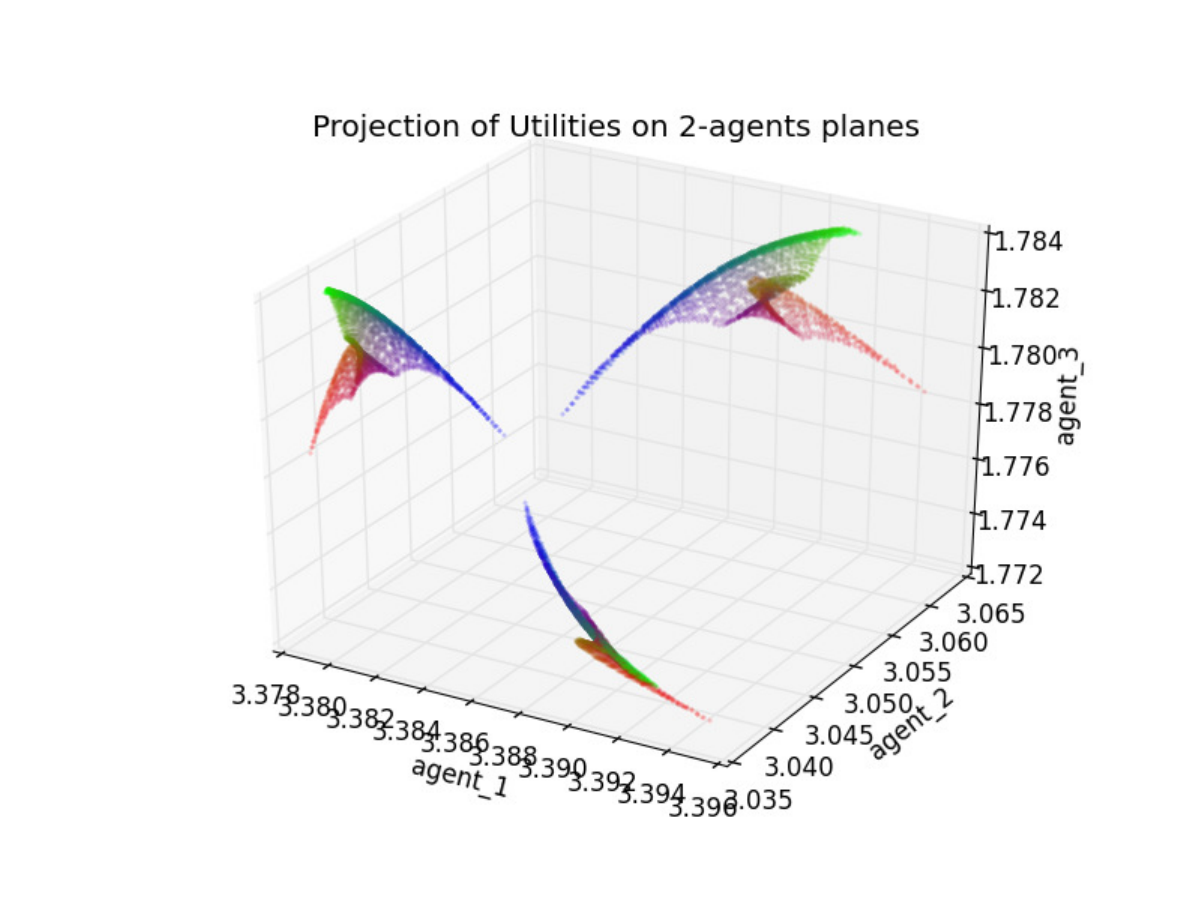}
\caption{$\alpha_1=\alpha_2=\alpha_3=0.2$ moderate inequality: agent 1 richer than agents 2 and 3.}
\label{fig:eq_8_ut}
\end{figure}

\noindent
Now consider the case in which agent 1 is still the richest, but the initial allocation is much less unequal than the previous two cases. The initial allocations in this case are $x_{1,1}>x_{2,1}>x_{3,1}$ and $x_{1,2}>x_{2,2}>x_{3,2}$, so agent 3 is the poorest. The results are represented in Figure \ref{fig:eq_8_ut}, preferences are the same as before. We can see how the picture drastically changes: now agent 2 worst position is when she is a peripheral node of a star where agent 1 is the core, and the higher the frequency of trade in which agent 1 is involved, the lower agent's 2 utility. Agent 3, the most disadvantaged, is worst off when she is in the periphery of a star with 2 in the core, she would rather prefer agent 1 to be the core. In general her utility will decrease the higher the weight on the edge between 2 and 1.

\end{document}